\newtheorem{lemma}{Lemma}
\newtheorem{proposition}{Proposition}
\mathchardef\minus="002D
\def\<{\langle}
\def\>{\rangle}
 \def\ket#1{| #1 \rangle}
\def\bra#1{\langle #1 |}
\def\ketbra#1#2{| #1 \rangle \langle#2 |}
\def\braket#1#2{\langle #1 | #2 \rangle}
\DeclareMathOperator{\spn}{span}
\DeclareMathOperator{\supp}{supp}
\newcommand\Item[1][]{%
  \ifx\relax#1\relax  \item \else \item[#1] \fi
  \abovedisplayskip=0pt\abovedisplayshortskip=0pt~\vspace*{-\baselineskip}}
\begin{document}
\title{A massless interacting Fermionic Cellular Automaton
  exhibiting bound states}

\author{Edoardo
  \surname{Centofanti}} \email[]{edoardo.centofanti01@universitadipavia.it}
\affiliation{Dipartimento di Matematica dell'Universit\`a di Pavia, via
  Ferrata 5, 27100 Pavia} 
\author{Paolo
  \surname{Perinotti}} \email[]{paolo.perinotti@unipv.it}
\affiliation{Dipartimento di Fisica dell'Universit\`a di Pavia, via
  Bassi 6, 27100 Pavia} \affiliation{Istituto Nazionale di Fisica
  Nucleare, Gruppo IV, via Bassi 6, 27100 Pavia} 
\author{Alessandro 
  \surname{Bisio}} \email[]{alessandro.bisio@unipv.it}
\affiliation{Dipartimento di Fisica dell'Universit\`a di Pavia, via
  Bassi 6, 27100 Pavia} \affiliation{Istituto Nazionale di Fisica
  Nucleare, Gruppo IV, via Bassi 6, 27100 Pavia} 

\begin{abstract} 
  We present a Fermionic Cellular Automaton model
  which describes massless Dirac fermion in $1+1$
  dimension coupled with local, number preserving
  interaction.  The diagonalization of the two
  particle sector shows that specific values of
  the total momentum and of the coupling constant
  allows for the formation of bound states.
  Furthermore, we present a classification of the
  local number-preserving interactions that are
  invariant under the isotropy group of the
  cellular automaton which simulates the Weyl
  equation.
\end{abstract} 
\maketitle

\section{Introduction}

Quantum cellular automata
(QCAs)\cite{Farrelly_2020,arrighi2019overview,schumacher2004reversible,gross2012index}
are the most general unitary dynamics of a lattice
of quantum systems which is discrete in time and
local, namely the speed of information
propagation is bounded.  The idea of QCAs can be
traced back to the seminal work of Feynman
\cite{feynman1982simulating} where QCAs were
introduced as quantum simulators.  Since then,
QCAs have been considered as a paradigm for
quantum
computation\cite{watrous1995one,PhysRevA.72.022301,PhysRevLett.102.180501,PhysRevA.81.042330}
and have been applied to the study of many bodies
quantum systems \cite{cirac2017matrix,PhysRevX.6.041070,haah2023nontrivial,PhysRevLett.125.190402,Zimboras2022doescausaldynamics,Hillberry_2021}.

Nevertheless, quantum simulation still is a
major application of QCAs, in particular as
discretized quantum field theories.
Many authors \cite{Bialynicki_Birula_1994,meyer1996quantum,PhysRevA.73.054302,Yepez:2006p4406,PhysRevA.90.062106,Arrighi_2014,Bisio2015244,bisio2014quantum,PhysRevA.97.062111} studied the simulation of non-interacting quantum
field theories with  discrete-time Quantum Walks
(QWs) \cite{ambainis2001one,portugal2013quantum},
which are the single-particle restriction of
Bosonic QCAs or Fermionic Cellular Automata 
(FCA)---a variation of QCAs where the cells 
correspond to arrays of local Fermionic modes~\cite{Bravyi2002210}. 
Indeed, a QW describes the most general discrete-time
evolution of a single particle on a lattice which
is unitary and local, i.e. at each step the
particle can move at most by a bounded number of
lattice sites. More recently, the focus shifted
towards to
models in which the particles interact with an
external potential \cite{PhysRevA.88.032301,PhysRevA.88.042301,PhysRevA.93.052301,Jay:2020aa}
and towards interacting field theories
\cite{bisio2018thirring,eon2022relativistic}.

Intuitively, we expect that a QCA/FCA which simulates
(or recovers) a given continuous dynamics is such
that, if we restrict to sufficiently smeared 
states which cannot probe the discreteness of the
lattice, we cannot tell the evolution apart from
the dynamics of a field on a continuous space.
For the non-interacting case this behaviour is
rather well understood. In the limit of small
masses and momentum, where we can neglect powers of  $m,k$
beyond the linear terms, and interpolating 
discrete time steps with a continuous time direction, 
the evolution of the cellular automaton is described 
by a relativistic wave equation. The interacting case, 
still rather unexplored, is significantly different. 
The intrinsic discreteness of cellular automata may 
produce phenomenological features with no counterpart 
in the continuum and that do not vanish in the large
scale limit, like a different set of bound states 
and scattering processes that conserve
energy only up to integer multiples of an additive constant 
(indeed, if time is discrete, energy is defined only modulo to 
the addition of integer multiples of $2\pi$, and can thus be 
reduced in the interval $[-\pi,\pi]$) 
\cite{bisio2018thirring,PhysRevLett.126.250503}.

In this work we classify the family of local
and number preserving interactions which are
invariant under the isotropy group of the Weyl
automaton, namely the fermionic cellular automaton
which recovers the Weyl equation in 3+1
dimensions.

Then we study an interacting $1+1$-dimensional FCA
given by the composition of a free evolution,
which is modeled after the massless Dirac equation
\cite{PhysRevA.90.062106} followed by a quartic,
number preserving local interaction. We chose this
interaction among the isotropic 
ones that we classified.

We will study the
evolution of
the two particle
sector by providing the spectral resolution of the
unitary evolution corresponding to a single step
of the FCA. 
Our analysis will show that, even if the free
theory is gapless and the interaction is a compact
perturbation, bound states appear for critical values 
of the total momentum and of the coupling constant. 

In section \ref{sec:2}
we review the technical tools employed for
QCAs and QWs and provide a mathematical description
of the Dirac automaton.

In section \ref{sec:3} we provide the classification of all
the suitable local, isotropic and number preserving interactions 
for our automaton, in particular for the 
two particle sector and we describe the interaction 
considered among the ones derived and its properties, 
then we derive the scattering and the
bounded states of the system. In appendices
\ref{sec:compl-solut} and \ref{sec:study-r_pm-} we
check the completeness of the solution set.

\section{Fermionic Cellular Automata}\label{sec:2}

A quantum cellular automaton
(QCA)\cite{schumacher2004reversible} describes the
single step unitary evolution $\mathcal{U}$ of a
set $\Gamma$ of cells each of which represents a
quantum system. Usually,  every cell
consists in a $d$-level system. However, we are
interested in Fermionic Cellular Automata (FCA)s,
where every
cell corresponds to $s$ local Fermionic
modes~\cite{kitaev1997quantum,d2014feynman}.
represented by the field operators $\psi_{x,a}$,
$a=1, \dots, s$ obeying the canonical
anticommutation relations:
\begin{align}
  \label{eq:2}
\{\psi_{x,a},\psi_{x',a'}\}=0, \quad \{\psi_{x,a},\psi_{x',a'}^\dagger\}=\delta_{x,x'}\delta_{a,a'}.
\end{align}
The $N$-particles states can be described by
introducing the Fock space representation
$\ket{(x_1,a_1),\ldots,(x_N,a_N)}\coloneqq
\psi^\dag_{x_1,a_1}\ldots
\psi^\dag_{x_N,a_N}\ket\Omega$, where $\ket\Omega$
is the {\em vacuum state}. The state $\ket\Omega$
is defined by the identity
$\psi_{x,a}\ket\Omega =0$ for all $i$ and it is
the state with no particles.  Here, we chose the
representation in which the vacuum state is the
one with no \emph{localized} excitations, whereas
in quantum field theory the vacuum state is
usually the ground state of a free Hamiltonian.

\subsection{Linear FCA with symmetries: Weyl and
  Dirac automata}
 In the particular case
of a free, i.e.~non-interacting, evolution, the
FCA is linear in the field operators, namely
$\mathcal{U}( \psi^\dag_{x,a})=\sum_{y,b}U^{y,b}_{x,a}
\psi_{y,b}$
for some complex coefficents $U_{x,a}^{y,b}$ and
the number of excitations
(i.e. particles) is
conserved. Therefore, the
dynamics is
completely specified by a \emph{quantum walk}
on the one-particle
sector
\begin{align}
  \label{eq:QW}
  \begin{aligned}
     &\ket{\phi(t+1)} =  \mathbf{U}
     \ket{\phi(t)},\\
     & \mathbf{U} \in \mathcal{L}(\mathcal{H}_1),
     \; \mathcal{H}_1\coloneqq \spn \{\ket{x,a}\}
\equiv \ell^2(\Gamma) \otimes \mathbb{C}^s,\\
     &\mathbf{U} \coloneqq
     \sum_{x,y,a,b} U_{x,a}^{y,b} \,
     \ketbra{x}{y}\otimes\ketbra{a}{b} ,
  \end{aligned}
\end{align}
where we use the isomorphism $\ket{x,a} \leftrightarrow
\ket{a}\ket{x}$.
Accordingly, the $n$-particle sector of a
linear FCA $\mathcal{U}$ is described
by the total antisymmetric subspace
of the $n$-particle
quantum walk $\mathbf{U} ^{\otimes{n}}$.
Therefore, a linear
FCA can be regarded as the
second-quantization of a Quantum Walk.

\begin{figure}[t]
    \includegraphics
   [width=0.7\columnwidth]{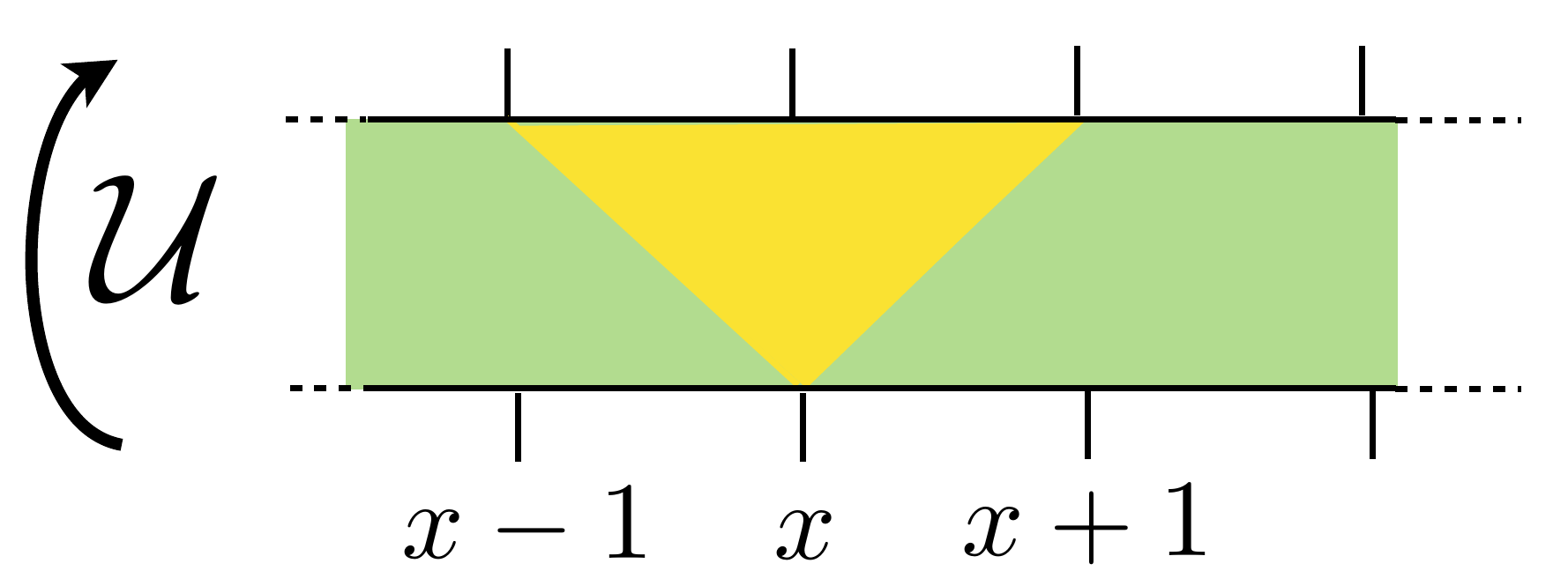}
   \caption{(Colors online) In a one-dimensional cellular
     automaton each cell is labelled by an integer
     number $x$ and the one-step time evolution is
     described by a unitary operator
     $\mathcal {U}$. In a nearest-neighbor
     one-dimensional QCA the state of the cell $x$
     at time $t+1$ depends only on the state of
     the cells $x-1$, $x$ and $x+1$ at time $t$.
     Because of the locality of the evolution,
     information cannot propagate
     at an arbitrary speed: physical interventions on
     cell $x$  can
     only affect the state of the cells lying in the future light
     cone of $x$ (in yellow). }
   \label{fig:1DQCA}
 \end{figure}

The FCA evolution must also be \emph{local}, namely,
for a fixed $x \in \Gamma$ $U_{x,a}^{y,b} \neq 0$
only if for finitely many $y \in \Gamma$, which
are called the \emph{neighbors} of $x$ (see
Fig.\ref{fig:1DQCA}).  From this locality
assumption, we have that the set $G$ is naturally
endowed with a graph structure, where
$x \in \Gamma$ are the vertices and there is an
edge between a cell and its neighbors.

In the
most general case, the update rule may change from
cell to cell \cite{gross2012index}. However, it is
reasonable to assume a \emph{homogeneous}
evolution, with the same upate rule for
every cell.  More precisely, the homogeniety
assumptions requires \cite{PhysRevA.90.062106,PhysRevA.100.012105} that the graph $\Gamma$ is
the Cayley graph of a group $G$ that can be
presented as $G = \langle S |R\rangle$
where $S$ is a
finite set of generators
and  $R$ is a finite set
of relators.
Each edge connecting a cell $x$ with
its neighbors corresponds to an element of
$S$ 
and elements of  $R$ correspond to closed path on
the graph.
If a linear FCA is local and homogeneous 
the unitary operator $\mathbf{U}$  of Equation
\eqref{eq:7} can be written as follows:
\begin{align}
  \label{eq:43}
  \mathbf{U} := \sum_{h \in S} \mathbf{T}_h
  \otimes \mathbf{U}_h
\end{align}
where 
$\mathbf{T}$ is the right regular representation
of $G$ on $\ell^2(G)$  acting as $\mathbf{T}_h
\ket{x} := \ket{xh^{-1}}$
and $\mathbf{U}_h \in \mathcal{L}(\mathbb{C}^s)$.

Let us now review a notion of \emph{isotropy}
\cite{PhysRevA.96.062101} for a FCA as in Equation
\eqref{eq:43}.  Let us consider a decomposition
$S = S_+ \cup s_- \cup \{e\}$ of the set of
generators ($e$ denotes the identity on $G$) and
let $L$ be a group of graph automorphisms that is
transitive over $S_+$ and
$\mathbf{V} : L \to \mathcal{L}(\mathbb{C}^s)$ be
a faithful projective representation of $L$.  Then,
we say that $\mathbf{U}$ is isotropic if the
following covariance condition holds:
\begin{align}
  \label{eq:41}
  \mathbf{U} =
  \sum_{h \in S} \mathbf{T}_h
  \otimes \mathbf{U}_h =
  \sum_{h \in S} \mathbf{T}_h
  \otimes \mathbf{V}_l \mathbf{U}_h
  \mathbf{V}^\dag_l, \quad \forall l \in L.
\end{align}
Generally, the same FCA on the Cayley graph
corresponding to the presentation $G = \langle S |
R \rangle$ might
satisfy isotropy for one or more
choices of the set $S_+$, group $L$ and the
representation $\mathbf{U}$. For a given $S$,
different choices of $S_+$ correspond to different
orientations of some edges over the same 
graph.

The Cayley graph $\Gamma$ can be endowed with the
word metric: the distance between $x$ and $y$ is
equal to the lenght of the shortest path
connecting $x$ and $y$.  If the dynamics of a free
fermionic field on flat spacetime is expected to
emerge as a large scale description of the FCA,
then $\Gamma$ must be quasi isometrically
embeddable \footnote{Given two metric spaces
  $(M_1,d_1)$ and $(M_2,d_2)$, with $d_1$ and
  $d_2$ the metric of the two spaces, a map
  $f:(M_1,d_1)\rightarrow (M_2,d_2)$ is a
  quasi-isometry \cite{de2000topics} if there
  exist constants $A\geq 1$, $B,C\geq 0$, such
  that $\forall g_1,g_2\in M_1$ one has
  $ d_1(g_1,g_2)/A-B\leq d_2(f(g_1),f(g_2))\leq A
  d_1(g_1,g_2)+B, $ and $\forall m\in M_2$ there
  exists $g\in M_1$ such that
  $d_2(f(g),m)\leq C $. Intuitively, if there
  exists a quasi isometry between $(M_1,d_1)$ and
  $(M_2,d_2)$, then the large scale geometry of
  the two spaces are the same.  } in the Euclidean
space $\mathbb{R}^3$.  A well known result in
geometric group theory \cite{gromov1984infinite}
states that a group $G$ is quasi-isometrically
embeddable in $\mathbb{R}^n$ space if and only if
it has a subgroup isomorphic to $\mathbb{Z}^n$ of
finite index.

\begin{figure}[t]
    \includegraphics
   [width=\columnwidth]{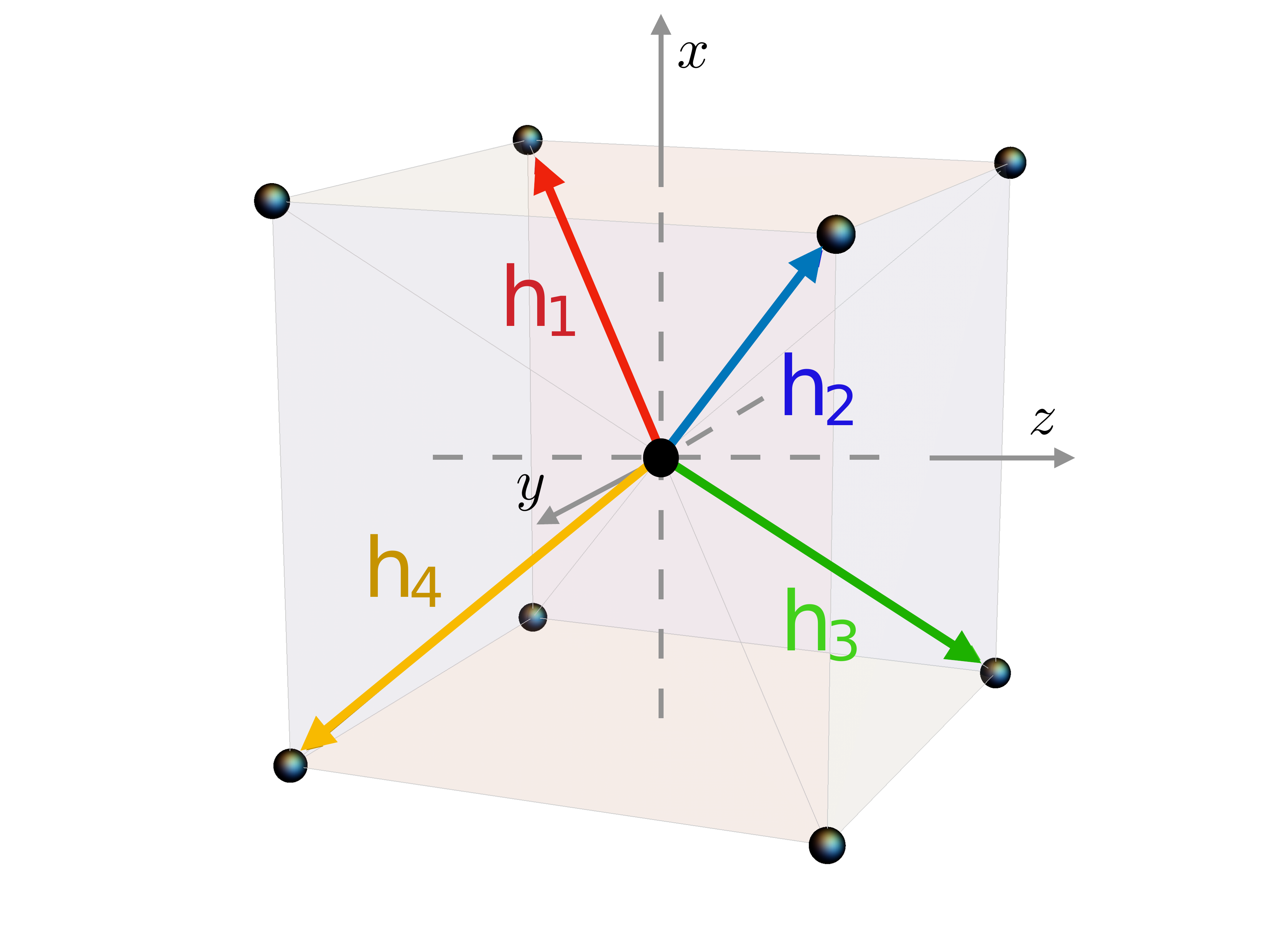}
   \caption{The body-centered cubic (BCC) lattice
     is the Cayley graph of $\mathbb{Z}^3$
     corresponding to the presentation
     $\mathbb{Z}^3 = \langle h_1, h_2, h_3, h_4| h_1+h_2+h_3+h_4 \rangle$ (the abelianity relators are omitted).
     The isotropy group are the $\pi$ rotation
     around the $x,y$ and $z$ axis, i.e.
     $L:= \{ I, R(\pi,x), R(\pi,y), R(\pi,z) \}$.}
   \label{fig:BCC}
 \end{figure}

Let us now restrict ourselves
to the case $G = \mathbb{Z}^3$ and to FCAs with two
femionic degrees of freedom per cell ($s=2$).
Then, the only graph which admits an isotropic linear
FCA  is the BCC lattice \cite{PhysRevA.96.062101} 
and we only have the following two isotropic
linear FCAs (see Fig.\ref{fig:BCC}):

\begin{align}
  \label{eq:WeylWalk}
&  \mathbf{W}^{(\pm)} := \sum_{h \in S}
\mathbf{T}_h \otimes \mathbf{W}^{(\pm)}_{h}, 
  \\
&\begin{aligned}
  \mathbf{W}^{(\pm)}_{h_1}&:= 
  \begin{pmatrix}
    \eta^{\pm}&0\\
    \eta^{\pm}&0
  \end{pmatrix}, &
  \mathbf{W}^{(\pm)}_{-h_1}&:= 
  \begin{pmatrix}
    0&-\eta^{\mp}\\
    0&\eta^{\mp}
  \end{pmatrix},\\
\mathbf{W}^{(\pm)}_{\pm h_2} &:= \sigma_x \mathbf{W}^{(\pm)}_{\pm h_1}
\sigma_x,  &
\mathbf{W}^{(\pm)}_{\pm h_3} &:= \sigma_y \mathbf{W}^{(\pm)}_{h_1}
\sigma_y , 
\\
\mathbf{W}^{(\pm)}_{\pm h_4} &:= \sigma_z \mathbf{W}^{(\pm)}_{h_1}
\sigma_z , & \eta^{\pm} &:= \frac{1\pm i}{4}  .
\end{aligned}
\end{align}
with $S_+ = \{h_1,h_2,h_3,h_4\}$ and the
nontrivial relator $h_1+h_2+h_3+h_4=0$.
Given an orthonormal reference frame, the
generators can be chosen as follows (see Fig.~\ref{fig:BCC})
\begin{equation}
  \begin{split}
    &h_1=\frac 1{\sqrt3}
    \begin{pmatrix}
      1\\
      -1\\
      -1
    \end{pmatrix},\ 
    h_2=\frac 1{\sqrt3}
    \begin{pmatrix}
      1\\
      1\\
      1
    \end{pmatrix},\\
    &h_3=\frac 1{\sqrt3}
    \begin{pmatrix}
      -1\\
      -1\\
      1
    \end{pmatrix},\ 
    h_4=\frac 1{\sqrt3}
    \begin{pmatrix}
      -1\\
      1\\
      -1
    \end{pmatrix}.
  \end{split}
  \label{eq:vers}
\end{equation}
The isotropy group of $W$ is the subgroup of
$SO(3)$ given by the the $\pi$ rotation around the
$x,y$ and $z$ axis, and
we have the usual spin-$1/2$ representation on the
internal degrees of freedom, i.e.
\begin{align}
  \label{eq:52}
  &L:= \{ I, R(\pi,x), R(\pi,y), R(\pi,z), \}\\
 & \begin{aligned}
  &\mathbf{V}^{W}_{I} = I, && 
\mathbf{V}^{W}_{R(\pi,x)} = i \sigma_x, \\
&\mathbf{V}^{W}_{R(\pi,y)} = i \sigma_y, &&
\mathbf{V}^{W}_{R(\pi,z)} = i \sigma_z.    
  \end{aligned}
\end{align}
For the abelian group $\mathbb{Z}^3$, the
homogeneity assumption is the usual translation
invariance. It is therefore convenient to
introduce the
the Fourier transform of the local modes
\begin{align}
  \label{eq:44}
  \begin{aligned}
\psi^\dag_{k,a} &\coloneqq
\sum_x e^{-ik\cdot x} \psi^\dag_{x,a}, \quad 
           k \in
  B,    
  \end{aligned}
\end{align}
where $B$ denotes the Brillouin zone of the BCC
lattice.  Then, a generic one-particle state can
be written as
$\ket{\phi} = \int_{B} dk \sum_{a}\tilde{c}_{a,k}
\ket{a}\ket{k}$ with
$\ket{k}\coloneqq\sum_{x}e^{-ik\cdot x}\ket{x}$
and the unitary operator of Equation
\eqref{eq:WeylWalk} becomes \cite{PhysRevA.90.062106}
\begin{align}
  \label{eq:42}
&\mathbf{W}^{(\pm)} =  \int_B dk \ketbra{k}{k}
  \otimes \mathbf{W}^{(\pm)} (k)\\
  &\begin{aligned}
    \mathbf{W}^{(\pm)} (k) & = d^{(\pm)}(k) I -
                                    i a^{(\pm)}_x(k) \sigma_x  + \\
 & \pm
  i a^{(\pm)}_y(k) \sigma_y +
  -i a^{(\pm)}_z(k) \sigma_z    
  \end{aligned}
  \\
  &\begin{aligned}
  &d^{(\pm)}(k):=c_xc_yc_z\mp s_xs_ys_z \\
  &a^{(\pm)}_x(k)  := s_xc_yc_z  \pm c_xs_ys_z \\
  &a^{(\pm)}_y(k) :=c_xs_yc_z  \mp  s_xc_ys_z \\
  &a^{(\pm)}_z(k) :=c_xc_ys_z \pm s_xs_yc_z.   
\end{aligned}\\
  &c_i := \cos\tfrac{k_i}{\sqrt3} \quad s_i := \cos\tfrac{k_i}{\sqrt3}
\end{align}
One could verify \cite{PhysRevA.90.062106} that,
in the limit $k\to 0$ we have
$ \mathbf{W}^{(\pm)} (k) \approx \exp(-i
H^{\pm}_W(k))$ where $H^{+}_W(k)$ is the
Hamiltonian of the right handed Weyl equation and
$H^{-}_W(k)$ is the Hamiltonian of the left handed
Weyl equation (up to a change of coordinates).
Therefore, the FCA in Equation \eqref{eq:WeylWalk}
is called Weyl cellular automaton.  The Dirac
automaton arises \cite{PhysRevA.90.062106} from the
local coupling of two Weyl automata in a
direct sum:
\begin{align}
  \label{eq:50}
  &\mathbf{D} :=  \int_B dk \ketbra{k}{k}
                  \otimes \mathbf{D} (k)\\
  &\begin{aligned}
  &\mathbf{D}(k) :=
                  \begin{pmatrix}
                  n \mathbf{W} (k)  & im I
                  \\
                  im I & n  \mathbf{W}^{\dag} (k) 
                  \end{pmatrix} \\
                 & n^2+m^2 =1,   
               \end{aligned} 
\end{align}
where we omitted the $\pm$ superscript in order to
loghten the notation.
It is easy to verify that in the $k \to 0$ and
$m\to 0$ limit, the Dirac automaton of
Equation~\eqref{eq:50} recovers the Dirac
equation, with the parameter $m$ playing the role
of the rest mass.

The Dirac automaton inherit the same isotropy group $L$
of the Weyl auomata (see Equation \eqref{eq:52})
with the corresponding direct sum representation
on the internal degrees of freedom:
\begin{align}
  \label{eq:53}
  \begin{aligned}
  &\mathbf{V}^{D}_{I} = I, && 
\mathbf{V}^{D}_{R(\pi,x)} = - \sigma_x\oplus  \sigma_x, \\
&\mathbf{V}^{D}_{R(\pi,y)} = - \sigma_y \oplus \sigma_y, &&
\mathbf{V}^{D}_{R(\pi,z)} = - \sigma_z \oplus \sigma_y.          
  \end{aligned}
\end{align}

\section{Dirac automaton with isotropic local interactions}\label{sec:3}
In the previous section we focused on linear FCAs
QCAs.  Describing an interaction, instead,
involves non-linear evolutions of the field
operators.  One could interpret a local
interaction term as a local change of basis which
models the arbitrariness of the identification of
local bases at subsequent {\em discrete} time
steps, as in a discrete version of a gauge theory.
At each site $x$ of the graph we have the action
of a local gate $\exp{(-i J_x)}$ where $J_x$ is some
Hermitian operator localized at site $x$.  Then,
the interaction step can be written as follows:
\begin{align}
 & \mathcal{J} \big( \psi_{x,a} \big) \coloneqq
\exp ( -i J)
  \psi_{x,a}\exp \big( i J\big), \quad
  J \coloneqq \sum_x J_x.
\end{align}

It is natural to require that the interaction term
has the same symmetries of the free
evolution. For a localized interaction,
transaltion invariance trivially implies that the
interacting gate is the same at all sites.
The isotropy assumption requires that
the local interaction
must commute
with the representation of the isotropy group on
the internal degrees of freedom.
For the Dirac automaton of Equation
\eqref{eq:50}, the isotropy assumption implies that
the interaction term commutes with 
 the discrete rotations
of Equation  \eqref{eq:53}.

In this paper, we focus our analysis on the simplest
case of number-preserving interactions.
The most general local, isotropic and number
preserving interaction can be written as follows:
\begin{align}
  \label{eq:class}
 & J_x := J_{x}^{q} + J_{x}^{e}+ J_{x}^{o}, \\
 & \begin{aligned}
  \label{eq:3}
  J_{x}^{q} :=&
  \lambda_1
  (n_{x,3} \psi_{x,4}^\dag\psi_{x,2}  +
  n_{x,4} \psi^\dag_{x,3} \psi_{x,1}) + h.c.+\\
  &\lambda_2
  (n_{x,1} \psi_{x,2}^\dag  \psi_{x,4} +
  n_{x,2} \psi_{x,1}^\dag  \psi_{x,3}) + h.c. + 
  \\
 &\lambda_3
 (\psi^\dag_{x,1} \psi^\dag_{x,2}   \psi_{x,3}
  \psi_{x,4}) + h.c. + \\
   & \lambda_4
(\psi^\dag_{x,2} \psi^\dag_{x,4}   \psi_{x,1}
  \psi_{x,3} + h.c.)+ \\
  &\lambda_5
  (\psi^\dag_{x,2} \psi^\dag_{x,3}   \psi_{x,1}
  \psi_{x,4} + h.c.) +\\
  &\lambda_6
  (n_{x,1}n_{x,3} + n_{x,2}n_{x,4}) + \\
  &\lambda_7
  (n_{x,1}n_{x,4} + n_{x,2}n_{x,3})+\\
  &\lambda_8
  n_{x,1}n_{x,2}+
  \lambda_9
  n_{x,3}n_{x,4}   , 
\end{aligned}  \\
  &
\begin{aligned}
    J_{x}^{e} := 
&    \xi_1
    (n_{x,1}n_{x,3}\psi^\dag_{x,4}\psi _{x,2} + \\
    & +n_{x,2}n_{x,4}\psi^\dag_{x,3}\psi _{x,1}) +h.c +
    \\
  &  \xi_2
    (n_{x,1}+n_{x,2})n_{x,3}n_{x,4}+
    \\
   & \xi_3
    n_{x,1}n_{x,2}(n_{x,3}+n_{x,4}) ,
  \end{aligned} \\
&  J_x^{o} := \chi   n_{x,1}n_{x,2}n_{x,3}n_{x,4}.
\end{align}
where we introduced the number operators
$n_{x,i} := \psi^\dag_{x,i}\psi_{x,i}$.
The proof of the classification has been performed
through the symbolic calculation python module
\texttt{sympy} and the code used is available on
GitHub.

We now procced with the phenomenological analysis
of an interacting FCA. We will restrict to the
easiest non-trivial case, namely two massless
4-spinors in one dimension.  We thus consider a FCA
model $\mathcal A$ of the form
\begin{align}
  \label{eq:1}
  \mathcal{A} \coloneqq\mathcal{J} \circ \mathcal{F},
\end{align}
where $\mathcal{F}$ is the Dirac FCA
in the one dimensional case for $m=0$, i.e.
\begin{align}
  \label{eq:7}
  &\mathbf{F} :=  \int_{\pi}^{\pi} dk \ketbra{k}{k}
                  \otimes \mathbf{F} (k)\\
  &\begin{aligned}
  &\mathbf{F}(k) :=
                  \begin{pmatrix}
                 e^{-ik}  &0 &0 &0 \\
                 0 & e^{ik} &0 &0 \\
                 0  &0 & e^{ik} &0 \\
                  0  &0 &0 & e^{-ik} 
                  \end{pmatrix} 
               \end{aligned}, 
\end{align}
or, in the position representation 
\begin{align}
  \begin{aligned}
&      \mathbf{F} =
  \mathbf{T} \otimes
  (\ketbra{1}{1} + \ketbra{4}{4}) +
  \mathbf{T}^\dag
  \otimes (\ketbra{2}{2} + \ketbra{3}{3}), \\
  & \mathbf{T} \ket{x} := \ket{x+1},
  \end{aligned}
\end{align}
and $\mathcal{J}$ is the interacting step
(see Fig. \ref{fig:interactingQCA}).
Since we are
dealing with a Fermionic system, the two particle
Hilbert space is given by the antisymmetric
subspace $\mathcal{H}_-$ of the tensor product
$\mathcal{H}_1\otimes \mathcal{H}_1$ of two copies of
the one particle space $\mathcal{H}_1 =
\spn \big \{ \ket{a}
  \ket{x}\big \} \equiv \mathbb{C}^4 \otimes \ell^2(\mathbb{Z})$ , where
$ a \in
  \{1,2,3,4\}$ and $ x \in
  \mathbb{Z} $.
  Namely, we have
  \begin{align}
\label{eq:14}    
\begin{aligned}
      \mathcal{H}_- \coloneqq \supp  \mathbf{P}_- ,\quad
    \mathbf{P}_- \coloneqq \frac{1}{2}\big(\mathbf{I} -
    \mathbf{S}\big),
    \end{aligned}
  \end{align}
where $   \mathbf{I} $ is the identity on
$\mathcal{H}_1 \otimes \mathcal{H}_1$ and
$\mathbf{S}$ is the swap operator
$\mathbf{S}
    \ket{a_1}\ket{a_2}
  \ket{x_1}
    \ket{x_2} =
    \ket{a_2}\ket{a_1}
  \ket{x_2}
  \ket{x_1}$, where 
  \begin{align}
  \label{eq:12}
  \begin{aligned}
  &\mathcal{H}_1 \otimes \mathcal{H}_1 = 
  \spn \big \{ \ket{a_1}\ket{a_2}
  \ket{x_1}\ket{x_2} \big \}, \\
  &\, a_1,a_2 \in
  \{1,2,3,4\} , \quad  x_1,x_2 \in
  \mathbb{Z}  .
  \end{aligned}
  \end{align}
Let us consider the two-particle sector of
$\mathcal{F}$.
We have
\begin{align}
  \label{eq:5}
  \begin{aligned}
  \mathbf{F} \otimes \mathbf{F} =&
  \mathbf{T}\otimes \mathbf{T} \otimes
\sum_{i,j \in \{1,4\}}  \ketbra{i}{i}\otimes
  \ketbra{j}{j} +\\
  & +\mathbf{T}\otimes \mathbf{T}^\dag \otimes
\sum_{\substack{i \in \{1,4\}\\ j\in \{2,3\}}}  \ketbra{i}{i}\otimes
  \ketbra{j}{j} \\
  &+\mathbf{T}^\dag \otimes \mathbf{T} \otimes
\sum_{\substack{i\in \{2,3\} \\ j \in \{1,4\}  }}  \ketbra{i}{i}\otimes
  \ketbra{j}{j} + \\
  &+ \mathbf{T}^\dag \otimes \mathbf{T}^\dag \otimes
\sum_{\substack{i\in \{2,3\} \\ j \in \{2,3\}  }}  \ketbra{i}{i}\otimes
  \ketbra{j}{j} .    
  \end{aligned}
\end{align}
Since we are considering local interaction, the
evolution commute with the total translation
operator            
\begin{align}
  \label{eq:9}
\mathbf{T}\otimes \mathbf{T} \otimes I   \ket{x_1}\ket{x_2}  \ket{i}\ket{j}
 =
  \ket{x_1+1}\ket{x_2+1}  \ket{i}\ket{j}.
\end{align}
Therefore it 
convenient to introduce the relative
coordinate basis \footnote{we notice that only the pairs $(y,z)$ with $y$ and
$z$ both even or both odd correspond to integers
$x_1$ and $x_2$. However it is convenient to
consider the extended Hilbert space in which $y$
and $z$ run free.}
\begin{align}
  \label{eq:10}
  &
  \ket{x_1}\ket{x_2} \ket{i}\ket{j}\mapsto 
  \ket{z}\ket{y} \ket{i}\ket{j} + \\
  &y \coloneqq x_1-x_2 \quad z \coloneqq x_1+x_2 
\end{align}
and to Fourier transform the $z$ coordinate, i.e.
\begin{align}
  \label{eq:17}
   &
  \ket{p}\ket{y} \ket{i}\ket{j} \coloneqq  
  \left ( \sum_{z \in \mathbb{Z}} e^{-i p z}
     \ket{z} \right) \ket{y} \ket{i}\ket{j}.
\end{align}
In this basis, the free evolution $\mathbf{F}
\otimes\mathbf{F}$ becomes
\begin{align}
  \label{eq:18}
    \begin{aligned}
      \mathbf{F} \otimes \mathbf{F} =&
      \int_{\pi}^\pi \ketbra{p}{p} \otimes
      \mathbf{F}_2(p) \\
      \mathbf{F}_2(p):=&
  e^{-ip} \mathbf{I} \otimes
\sum_{i,j \in \{1,4\}}  \ketbra{i}{i}\otimes
  \ketbra{j}{j} +\\
  & +\mathbf{T} \otimes
\sum_{\substack{i \in \{1,4\}\\ j\in \{2,3\}}}  \ketbra{i}{i}\otimes
  \ketbra{j}{j} +\\
  &+\mathbf{T}^\dag\otimes
\sum_{\substack{i\in \{2,3\} \\ j \in \{1,4\}  }}  \ketbra{i}{i}\otimes
  \ketbra{j}{j} + \\
  &+ e^{ip} \mathbf{I} \otimes
\sum_{\substack{i\in \{2,3\} \\ j \in \{2,3\}  }}  \ketbra{i}{i}\otimes
  \ketbra{j}{j} .    
  \end{aligned}
\end{align}
Let us now consider the interaction terms of
Equation \eqref{eq:3}. Since we are considering
the two particle sector only the 4-fermion
interactions acts nontrivially.  We
notice that the free evolution of Equation
\eqref{eq:7} does not change the relative
direction of propagation of the modes. In order to
have a richer phenomenology we investigate the
interaction with the constant $\lambda_1$ in
Equation~\eqref{eq:class} which couples modes with 
different relative direction of propagation.

\begin{figure}[t]
    \includegraphics
   [width=\columnwidth]{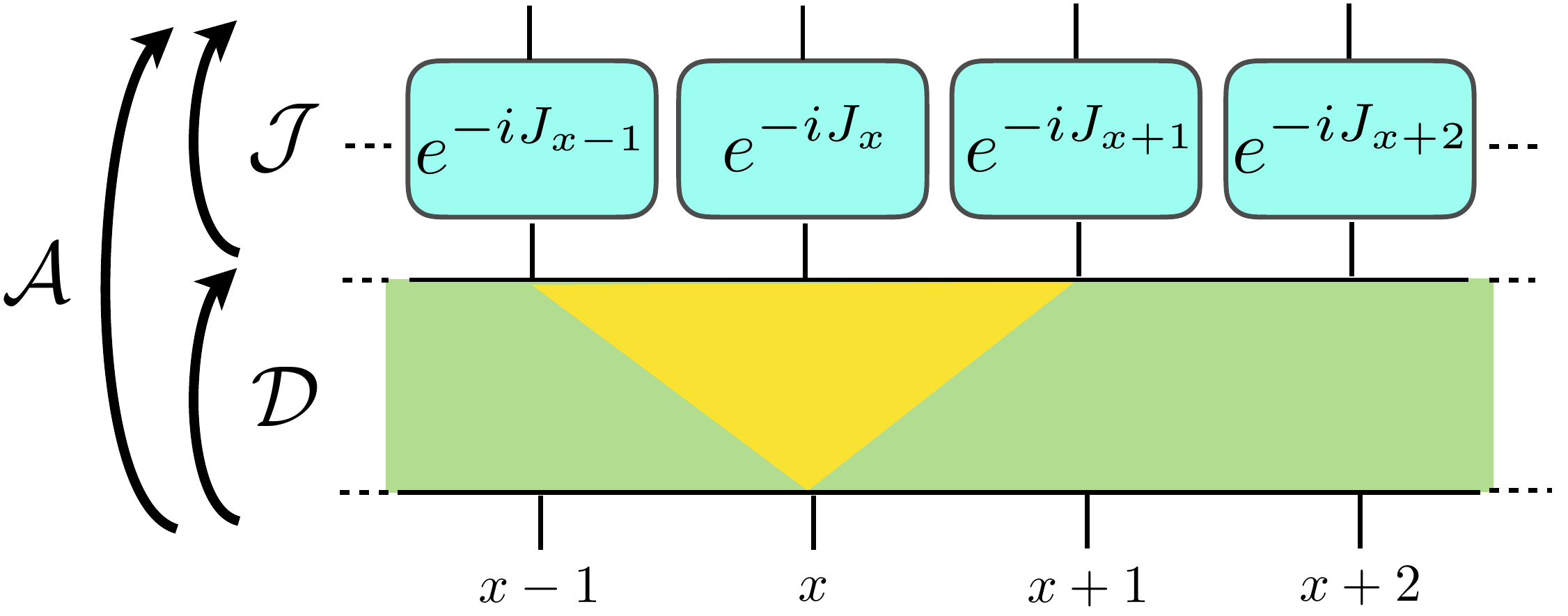}
   \caption{ (Colors online) 
The single step evolution of the FCA of
Equation~\eqref{eq:1}. Each site of the lattice
corresponds to a four-component fermionic
field. The free evolution $\mathcal{D}$ involves
the nearest neighbors while the  interaction $\mathcal{J}$ is completely
local. }
   \label{fig:interactingQCA}
 \end{figure}
Then, the interaction term reads as follows:
\begin{align}
  \label{eq:6}
 & \mathcal{J} \big( \psi_{x,a} \big) \coloneqq
\exp ( -i \sum_xJ_x)
  \psi_{x,a}\exp \big( i \sum_xJ_x\big), \\
  &\begin{aligned}
    \nonumber
  &J_x \coloneqq \lambda O_x +
  \overline{\lambda}O^\dagger_x , \qquad \lambda\in\mathbb{C},\\
  &O_x \coloneqq \psi_{x,2}
                   \psi^\dagger_{x,3}
                   \psi_{x,3}
                   \psi^\dagger_{x,4} +
                   \psi_{x,1}
                   \psi^\dagger_{x,3}
                   \psi^\dagger_{x,4}
                   \psi_{x,4} \,.
  \end{aligned}
\end{align}


From a straightforward computation it
follows that the two-particle sector of
$\mathcal{J}$ is described by the unitary operator
$ \mathbf{J}_2 \coloneqq \mathbf{P}_- \mathbf{J}
\mathbf{P}_-$, where
\begin{align}
  \label{eq:11}
  \begin{aligned}
  &\mathbf{J} \coloneqq \exp \big(-i \sum_{x}\mathbf{J}_{x} \big),
  \\
  &\mathbf{J}_{x} \coloneqq\big( \lambda \mathbf{O} +
  \overline{\lambda} \mathbf{O}^\dag \big) \otimes \ketbra{x}{x} \otimes \ketbra{x}{x}\\  
&\begin{aligned}
\mathbf{O}\coloneqq &\ketbra{3}{1} \otimes\ketbra{4}{4} +
\ketbra{3}{3}\otimes \ketbra{4}{2} +\\
&+\ketbra{4}{2} \otimes\ketbra{3}{3} +
\ketbra{4}{4}\otimes \ketbra{3}{1} \, .
\end{aligned}
\end{aligned}
  \end{align}
Since   $[\mathbf{F} \otimes \mathbf{F} ,
\mathbf{P}_- ] = [\mathbf{J}, \mathbf{P}_-] =0$,
  we have that the  two-particle sector of $\mathcal{A}$
is described by the operator
\begin{align}
  \label{eq:26}
  \mathbf{A}_2 \coloneqq \mathbf{P}_- \, \Big( \mathbf{J} \,
  (\mathbf{F} \otimes \mathbf{F}  )   \Big)\, \mathbf{P}_- .
\end{align}
Therefore, the goal of this section is to
diagonalize the operator $\mathbf{A}_2$.
Let us now define 
\begin{align}
  \label{eq:20}
  &\mathbf{Q} \coloneqq \mathbf{Q}' \otimes
      \mathbf{I} \otimes \mathbf{I}\\
    &\begin{aligned}
  \mathbf{Q}'\coloneqq
  &\ketbra{2}{2} \otimes \ketbra{3}{3} + \ketbra{3}{3} \otimes \ketbra{2}{2} +\\
  +&\ketbra{3}{3} \otimes \ketbra{4}{4} +
  \ketbra{4}{4} \otimes \ketbra{3}{3}
+\\
    +&\ketbra{1}{1} \otimes \ketbra{4}{4} 
  +\ketbra{4}{4} \otimes \ketbra{1}{1} .  
  \end{aligned}
\end{align}
From a straightworward computation we have
$[\mathbf{A}_2, \mathbf{Q}] = 0$ an therefore we
can write
\begin{align}
  \label{eq:21}
  \begin{aligned}
&\mathbf{A}_2 \coloneqq \widetilde{\mathbf{A}}_2 \oplus
                  \widetilde{\mathbf{A}}^\perp_2
               \,  \\
  &\widetilde{\mathbf{A}}_2, \coloneqq
  \mathbf{Q}\mathbf{A}_2 \mathbf{Q}, \qquad
  \widetilde{\mathbf{A}}^\perp_2 \coloneqq
  (\mathbf{I} - \mathbf{Q}) \mathbf{A}_2 (\mathbf{I} -
  \mathbf{Q}).     
  \end{aligned}
\end{align}
                              Since
                              $\mathbf{Q}\mathbf{J}
                              = \mathbf{J}
                                \mathbf{Q} = 0 $,
                                we have that 
                              $\widetilde{\mathbf{A}}^\perp_2$
is a free evolution. Therefore, we 
                              focus 
                              our analysis on the
                              non trivial term
                              $\widetilde{\mathbf{A}}_2$.
Let us define the vectors
\begin{align}
  \label{eq:13}
  \begin{aligned}
  \ket{e_1} \coloneqq \ket{1}\ket{4} 
\quad 
 \ket{e_2} \coloneqq \ket{3}\ket{2} \quad
 \ket{e_3} \coloneqq \ket{3}\ket{4}\\
 \ket{e_4} \coloneqq \ket{4}\ket{1}\quad
\ket{e_5} \coloneqq \ket{2}\ket{3} 
\quad \ket{e_6} \coloneqq \ket{4}\ket{3}  
  \end{aligned}
\end{align}
and the Hilbert space
$\widetilde{\mathcal{H}}\coloneqq \spn \{ \ket{e_i}, \}$. In the basis of Eq.~\eqref{eq:13}, we have:
\begin{align}
  \label{eq:16}
\lambda  \mathbf{O} + \overline{\lambda}
  \mathbf{O}^\dagger  =
\begin{pmatrix}
  {\mathbf{O}}'  & \mathbf{0}  \\
\mathbf{0}  &  {\mathbf{O}}'
\end{pmatrix},\qquad
   {\mathbf{O}}' \coloneqq
  \begin{pmatrix}
0 & 0 & -\bar{\lambda}   \\
0 & 0 & -\bar{\lambda}  \\
-\lambda & -\lambda & 0  \\
\end{pmatrix}.
\end{align}
 If we consider the total translation operator                 
\begin{align}
  \label{eq:9}
  \mathbf{T}_2  \ket{e_i}
  \ket{x_1}\ket{x_2}  = \ket{e_i}
  \ket{x_1+1}\ket{x_2+1}  
\end{align}
it is  straightforward to verify the commutation relation
$ [\mathbf{T}_2,  \widetilde{\mathbf{A}}_2] =0 $.
Therefore, it is convenient to introduce the relative
coordinate basis \footnote{we notice that only the pairs $(y,z)$ with $y$ and
$z$ both even or both odd correspond to integers
$x_1$ and $x_2$. However it is convenient to
consider the extended Hilbert space in which $y$
and $z$ run free.}
\begin{align}
  \label{eq:10}
  &\ket{e_i}
  \ket{x_1}\ket{x_2} \mapsto \ket{e_i}
  \ket{y}\ket{z}  \\
  &y \coloneqq x_1-x_2 \quad z \coloneqq x_1+x_2 
\end{align}
and to Fourier transform the $z$ coordinate, i.e.
\begin{align}
  \label{eq:17}
   &\ket{e_i}
  \ket{y}\ket{p} \coloneqq  \ket{e_i}
  \ket{y} \sum_{z \in \mathbb{Z}} e^{-i p z} \ket{z}.
\end{align}
In this basis, the operator
$\widetilde{\mathbf{A}}_2$ can be written as
follows:
\begin{align}
  \label{eq:19}
  &\widetilde{\mathbf{A}}_2 = \int_{-\pi}^{\pi}\!\! \! dp \, \,\,
                              \widetilde{\mathbf{A}}_2
                             (p) \otimes \ketbra{p}{p}, \\
  &\widetilde{\mathbf{A}}_2(p) \coloneqq 
                              \widetilde{\mathbf{P}}_- \,
                                \widetilde{\mathbf{J}} \,
                             \widetilde{\mathbf{D}}_p
                                \,
                                \widetilde{\mathbf{P}}_-, \label{eq:23}\\
  \label{eq:8}
  &\widetilde{\mathbf{J}} \coloneqq \exp
                           \left( -i \begin{pmatrix}
  {\mathbf{O}}'  & \mathbf{0}  \\
\mathbf{0}  &  {\mathbf{O}}'
\end{pmatrix}
                           \otimes \ketbra{0}{0} \right)
                            , \\
   &\widetilde{\mathbf{D}}_p\coloneqq \begin{pmatrix}
e^{2ip} & 0 & 0 & 0 & 0 & 0\\
0 & e^{-2ip} & 0 & 0 & 0 & 0\\
0 & 0 & \mathbf{T}_y^{\dagger 2} & 0 & 0 & 0 \\
0 & 0 & 0 & e^{2ip} & 0 & 0 \\
0 & 0 & 0 & 0 & e^{-2ip} & 0 \\
0 & 0 & 0 & 0 & 0 & \mathbf{T}_y^{2}  \\
\end{pmatrix},\\
  &\widetilde{\mathbf{P}}_-\coloneqq \frac12
                            \left(\mathbf{I}\otimes
                            \mathbf{I} -
                            \begin{pmatrix}
{\mathbf{0}}  & \mathbf{I}  \\
\mathbf{I}  &  {\mathbf{0}} 
\end{pmatrix}\otimes \sum_y\ketbra{-y}{y} \right).                
\end{align}
Since the eigenvalues of $ {\mathbf{O}}'$ are
$0, \pm\sqrt{2}|\lambda|$ we have that
$\widetilde{\mathbf{J}}$ is the identity
if $\sqrt{2}|\lambda|$ is an integer multiple of $2\pi$. 
Therefore, we will assume that
$\sqrt{2}|\lambda| \neq 2n 
\pi$ ($n \in \mathbb{Z}$).
Then, we can focus on the
diagonalization of the (infinite-dimensional)
operator $\widetilde{\mathbf{A}}_2(p)$.  Let us
now consider the following subspaces (see also Fig. \ref{fig:subspaces}):
\begin{align}
  \label{eq:22}
  &\begin{aligned}
    &\widetilde{\mathcal{H}}_a  \coloneqq \spn \{ \ket{e_1}
  \ket{y} - \ket{e_4}  \ket{-y} , y \neq 0 \} ,  \\
   &\widetilde{\mathcal{H}}_b  \coloneqq \spn \{ \ket{e_2}
  \ket{y} - \ket{e_5}\ket{-y} , y \neq 0 \} , \\
 &    \widetilde{\mathcal{H}}_c  \coloneqq \spn \{ \ket{e_3}
  \ket{2y+1} - \ket{e_6}
  \ket{-2y-1} \}, \\
 & \widetilde{\mathcal{H}}_d  \coloneqq
  \big(
  \widetilde{\mathcal{H}}_a \oplus
  \widetilde{\mathcal{H}}_b \oplus
  \widetilde{\mathcal{H}}_c
  \big)^\perp.
\end{aligned}
\end{align}
We remark that
\begin{align}
\supp \widetilde{\mathbf{P}}_-= & \widetilde{\mathcal{H}}_a \oplus
  \widetilde{\mathcal{H}}_b \oplus
  \widetilde{\mathcal{H}}_c \oplus
  \widetilde{\mathcal{H}}_d.
\end{align}
\begin{figure}[t]
      \includegraphics
   [width=0.8\columnwidth]{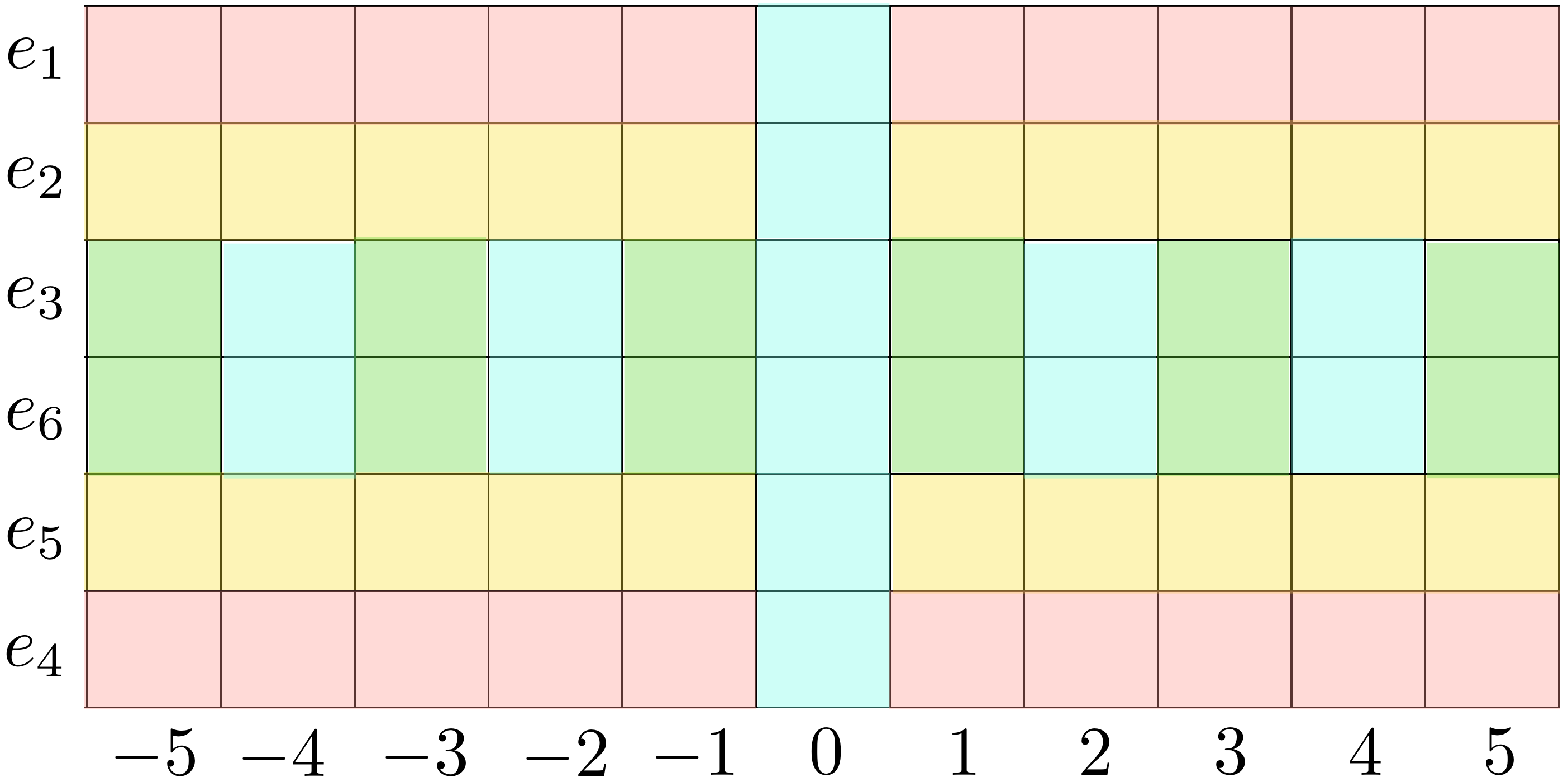}
   \caption{(Colors online) Schematic
     representation of the decomposition of
     Equation \eqref{eq:22}.  On the horizontal
     axis we have the spatial degrees of freedom,
     on the vertical axis the internal ones.
     $\widetilde{\mathcal{H}}_a$ corresponds to
     the red region, $\widetilde{\mathcal{H}}_b$
     corresponds to the yellow region,
     $\widetilde{\mathcal{H}}_c$ corresponds to
     the green region and
     $\widetilde{\mathcal{H}}_d$ corresponds to
     the blue region.  Only states whose support
     is in $\widetilde{\mathcal{H}}_d$ are
     affected by the interaction. }
  \label{fig:subspaces}
\end{figure}
We are now ready to proceed with the derivation of
the spectral resolution of
$\widetilde{\mathbf{A}}_2(p)$.
For the sake of clarity, we split the proof
into several lemmas. 
\begin{lemma}
  \label{lmm:specrestrivialpart}
  Let $\widetilde{\mathbf{A}}_2(p)$ be defined as
  in Equation \eqref{eq:23}. Then we have
  \begin{align}
    \label{eq:24}
    \begin{aligned}
    \widetilde{\mathbf{A}}_2(p) &= e^{i2p}
    \widetilde{\mathbf{P}}_a + e^{-i2p}
    \widetilde{\mathbf{P}}_b  + \\
    & +\widetilde{\mathbf{P}}_c  
    \widetilde{\mathbf{A}}_2(p)
    \widetilde{\mathbf{P}}_c  +
    \widetilde{\mathbf{P}}_d
    \widetilde{\mathbf{A}}_2(p) \widetilde{\mathbf{P}}_d  
    \end{aligned}
  \end{align}
  where 
  $\widetilde{\mathbf{P}}_i$, denotes the orthogonal
    projector on $\widetilde{\mathcal{H}}_i$.
\end{lemma}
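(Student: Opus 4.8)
The plan is to show that $\widetilde{\mathbf{A}}_2(p)$ is block diagonal with respect to the orthogonal decomposition $\supp\widetilde{\mathbf{P}}_-=\widetilde{\mathcal{H}}_a\oplus\widetilde{\mathcal{H}}_b\oplus\widetilde{\mathcal{H}}_c\oplus\widetilde{\mathcal{H}}_d$, acting as the scalars $e^{2ip}$ and $e^{-2ip}$ on the first two summands. Since $\widetilde{\mathbf{A}}_2(p)=\widetilde{\mathbf{P}}_-\widetilde{\mathbf{J}}\widetilde{\mathbf{D}}_p\widetilde{\mathbf{P}}_-$ annihilates the orthogonal complement of $\supp\widetilde{\mathbf{P}}_-$, it suffices to work inside $\supp\widetilde{\mathbf{P}}_-$. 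Concretely I would first verify that $\widetilde{\mathbf{A}}_2(p)$ maps each of $\widetilde{\mathcal{H}}_a$, $\widetilde{\mathcal{H}}_b$, $\widetilde{\mathcal{H}}_c$ onto itself with the stated action, and then invoke unitarity on $\supp\widetilde{\mathbf{P}}_-$ to conclude that $\widetilde{\mathcal{H}}_d$ is invariant as well, so that all off-diagonal blocks vanish and \eqref{eq:24} follows.

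\textbf{Unitarity on $\supp\widetilde{\mathbf{P}}_-$.} I would first record that $\widetilde{\mathbf{D}}_p$ and $\widetilde{\mathbf{J}}$ each commute with the two-particle swap entering $\widetilde{\mathbf{P}}_-$. For $\widetilde{\mathbf{D}}_p$ this holds because the swap interchanges $\ket{e_1}\ket{y}\leftrightarrow\ket{e_4}\ket{-y}$ and $\ket{e_2}\ket{y}\leftrightarrow\ket{e_5}\ket{-y}$, carrying the equal phases $e^{2ip}$ and $e^{-2ip}$ respectively, and $\ket{e_3}\ket{y}\leftrightarrow\ket{e_6}\ket{-y}$, under which $y\mapsto-y$ conjugates $\mathbf{T}_y^{\dagger 2}$ into $\mathbf{T}_y^{2}$ and back. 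For $\widetilde{\mathbf{J}}$ it holds because its generator $\mathrm{diag}({\mathbf{O}}',{\mathbf{O}}')\otimes\ketbra{0}{0}$ is the tensor product of an operator whose two identical $3\times 3$ blocks, on $\{\ket{e_1},\ket{e_2},\ket{e_3}\}$ and $\{\ket{e_4},\ket{e_5},\ket{e_6}\}$, are just permuted by the swap, with the projector $\ketbra{0}{0}$ on the relative coordinate, which $y\mapsto-y$ fixes. Hence $[\widetilde{\mathbf{J}}\widetilde{\mathbf{D}}_p,\widetilde{\mathbf{P}}_-]=0$, so on $\supp\widetilde{\mathbf{P}}_-$ one has $\widetilde{\mathbf{A}}_2(p)=\widetilde{\mathbf{J}}\widetilde{\mathbf{D}}_p$, a unitary preserving $\supp\widetilde{\mathbf{P}}_-$.

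\textbf{Action on $\widetilde{\mathcal{H}}_a,\widetilde{\mathcal{H}}_b,\widetilde{\mathcal{H}}_c$.} On $\widetilde{\mathcal{H}}_a$ the operator $\widetilde{\mathbf{D}}_p$ multiplies both the $\ket{e_1}$ and $\ket{e_4}$ components by $e^{2ip}$ without moving the relative coordinate, so a generator $\ket{e_1}\ket{y}-\ket{e_4}\ket{-y}$ with $y\neq 0$ is sent to $e^{2ip}$ times itself, a vector still supported away from relative coordinate $0$; since the generator of $\widetilde{\mathbf{J}}$ carries the factor $\ketbra{0}{0}$ there, $\widetilde{\mathbf{J}}$ acts as the identity on it, whence $\widetilde{\mathbf{A}}_2(p)\,\widetilde{\mathbf{P}}_a=e^{2ip}\widetilde{\mathbf{P}}_a$. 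The same computation with $e^{-2ip}$ gives $\widetilde{\mathbf{A}}_2(p)\,\widetilde{\mathbf{P}}_b=e^{-2ip}\widetilde{\mathbf{P}}_b$. On $\widetilde{\mathcal{H}}_c$, $\widetilde{\mathbf{D}}_p$ sends $\ket{e_3}\ket{2y+1}-\ket{e_6}\ket{-2y-1}$ to $\ket{e_3}\ket{2(y-1)+1}-\ket{e_6}\ket{-(2(y-1)+1)}$, again a generator of $\widetilde{\mathcal{H}}_c$ supported on odd — hence nonzero — relative coordinate, so $\widetilde{\mathbf{J}}$ is once more trivial and $\widetilde{\mathbf{A}}_2(p)$ restricts to a bijection of $\widetilde{\mathcal{H}}_c$ onto itself, i.e. $\widetilde{\mathbf{A}}_2(p)\,\widetilde{\mathbf{P}}_c=\widetilde{\mathbf{P}}_c\,\widetilde{\mathbf{A}}_2(p)\,\widetilde{\mathbf{P}}_c$.

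\textbf{Conclusion.} Since $\widetilde{\mathbf{A}}_2(p)$ is unitary on $\supp\widetilde{\mathbf{P}}_-$ and restricts to a bijection of each of $\widetilde{\mathcal{H}}_a$, $\widetilde{\mathcal{H}}_b$, $\widetilde{\mathcal{H}}_c$ onto itself, it must also restrict to a bijection of their orthogonal complement $\widetilde{\mathcal{H}}_d$ onto itself, so that $\widetilde{\mathbf{A}}_2(p)\,\widetilde{\mathbf{P}}_d=\widetilde{\mathbf{P}}_d\,\widetilde{\mathbf{A}}_2(p)\,\widetilde{\mathbf{P}}_d$ and all off-diagonal blocks vanish; summing the four contributions yields \eqref{eq:24}. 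I expect the only genuinely delicate point to be the relative-coordinate bookkeeping guaranteeing that $\widetilde{\mathbf{D}}_p$ never drags a vector of $\widetilde{\mathcal{H}}_a$, $\widetilde{\mathcal{H}}_b$ or $\widetilde{\mathcal{H}}_c$ onto the support of $\ketbra{0}{0}$ — which is precisely why $\widetilde{\mathcal{H}}_c$ must be defined on the odd sublattice of the relative coordinate — the remaining manipulations being routine.
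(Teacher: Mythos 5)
Your proof is correct and takes essentially the same route as the paper: the paper's (very terse) argument likewise rests on the observation that the four subspaces of Equation \eqref{eq:22} are invariant for $\widetilde{\mathbf{A}}_2(p)$ together with the direct verification of the scalar action $e^{\pm i2p}$ on $\widetilde{\mathcal{H}}_a$ and $\widetilde{\mathcal{H}}_b$. The only (harmless) difference is that the paper implicitly checks invariance of $\widetilde{\mathcal{H}}_d$ by direct computation as well, whereas you deduce it from the unitarity of $\widetilde{\mathbf{J}}\widetilde{\mathbf{D}}_p$ restricted to $\supp\widetilde{\mathbf{P}}_-$ (via $[\widetilde{\mathbf{J}}\widetilde{\mathbf{D}}_p,\widetilde{\mathbf{P}}_-]=0$) once the other three blocks are seen to be mapped onto themselves---a legitimate and slightly cleaner way to fill in the detail the paper leaves unstated.
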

\begin{proof}
  The subspaces $\widetilde{\mathcal{H}}_i$
  in Equation  \eqref{eq:22} are invariant
  subspaces of $\widetilde{\mathbf{A}}_2(p)$.
  One can directly verify that
  $\widetilde{\mathbf{P}}_a\widetilde{\mathbf{A}}_2(p)
  \widetilde{\mathbf{P}}_a =
  e^{i2p}\widetilde{\mathbf{P}}_a$
  and
  $\widetilde{\mathbf{P}}_b \widetilde{\mathbf{A}}_2(p)
  \widetilde{\mathbf{P}}_b =
  e^{-i2p}\widetilde{\mathbf{P}}_b$.
\end{proof}
\begin{lemma}
\label{lmm:specresfreepart}
  The operator $\widetilde{\mathbf{P}}_c  
    \widetilde{\mathbf{A}}_2(p)
    \widetilde{\mathbf{P}}_c $ has the following
    spectral resolution:
    \begin{align}
      \label{eq:25}
      \begin{aligned}
             \widetilde{\mathbf{P}}_c  
    \widetilde{\mathbf{A}}_2(p)
    \widetilde{\mathbf{P}}_c  &= \int_{-\pi} ^\pi
    \!\! \!\! dk \, 
      e^{-ik}\ketbra{\phi_f(k)}{\phi_f(k)},\\
      \ket{\phi_f(k)} &\coloneqq \frac{1}{\sqrt2} \big(
      \ket{e_3}\ket{k}_o - \ket{e_6}\ket{-k}_o\big) \\
      \ket{k}_o& \coloneqq \sum_{y\in\mathbb{Z} }
      \frac{1}{\sqrt{2 \pi}} e^{-iky} \ket{2y+1}.
      \end{aligned}
    \end{align}
\end{lemma}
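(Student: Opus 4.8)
The plan is to observe that the interaction $\widetilde{\mathbf{J}}$ never fires on $\widetilde{\mathcal{H}}_c$, so that $\widetilde{\mathbf{A}}_2(p)$ restricts there to a free shift which is then diagonalized by a Fourier transform in the relative coordinate. By Lemma~\ref{lmm:specrestrivialpart} — more precisely, by the invariance of the subspaces $\widetilde{\mathcal{H}}_i$ used in its proof — $\widetilde{\mathcal{H}}_c$ is $\widetilde{\mathbf{A}}_2(p)$-invariant, so it suffices to compute the action of $\widetilde{\mathbf{A}}_2(p)$ on the orthonormal spanning vectors $\ket{f_y}\coloneqq\tfrac1{\sqrt2}\big(\ket{e_3}\ket{2y+1}-\ket{e_6}\ket{-2y-1}\big)$ of $\widetilde{\mathcal{H}}_c$ and extend by zero on the orthogonal complement.

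I would unwind the three factors in \eqref{eq:23} one at a time. Since $\widetilde{\mathcal{H}}_c\subseteq\supp\widetilde{\mathbf{P}}_-$, the outer projectors act trivially on $\ket{f_y}$. On the $\ket{e_3}$ and $\ket{e_6}$ diagonal entries $\widetilde{\mathbf{D}}_p$ acts as $\mathbf{T}_y^{\dagger 2}$ and $\mathbf{T}_y^{2}$ respectively, with no dependence on $p$, so $\widetilde{\mathbf{D}}_p\ket{f_y}=\tfrac1{\sqrt2}\big(\ket{e_3}\ket{2y-1}-\ket{e_6}\ket{-2y+1}\big)=\ket{f_{y-1}}$; crucially the relative coordinate is merely translated by $\mp2$, so its parity is preserved and no vector of $\widetilde{\mathcal{H}}_c$ is ever carried onto the site $y=0$. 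Because $\widetilde{\mathbf{J}}$ is the exponential of an operator proportional to $\ketbra{0}{0}$ in the $y$ variable, and $\widetilde{\mathbf{D}}_p\ket{f_y}$ remains supported on odd $y\neq0$, $\widetilde{\mathbf{J}}$ acts as the identity. Composing the three, $\widetilde{\mathbf{A}}_2(p)\ket{f_y}=\ket{f_{y-1}}$, i.e. $\widetilde{\mathbf{P}}_c\widetilde{\mathbf{A}}_2(p)\widetilde{\mathbf{P}}_c$ is the extension by zero of the bilateral shift $\ket{f_y}\mapsto\ket{f_{y-1}}$ on $\widetilde{\mathcal{H}}_c\cong\ell^2(\mathbb{Z})$.

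The last step is to diagonalize this shift. Setting $\ket{g_k}\coloneqq\tfrac1{\sqrt{2\pi}}\sum_{y\in\mathbb{Z}}e^{-iky}\ket{f_y}$ for $k\in[-\pi,\pi]$, one checks directly that $\widetilde{\mathbf{P}}_c\widetilde{\mathbf{A}}_2(p)\widetilde{\mathbf{P}}_c\ket{g_k}=e^{-ik}\ket{g_k}$, while $\braket{g_k}{g_{k'}}=\delta(k-k')$ and $\int_{-\pi}^\pi dk\,\ketbra{g_k}{g_k}=\widetilde{\mathbf{P}}_c$ are the Plancherel identity on $\ell^2(\mathbb{Z})$. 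Re-expanding $\ket{g_k}$ in the computational basis and gathering the $\ket{e_3}$ and $\ket{e_6}$ parts into the plane waves $\ket{\pm k}_o$ supported on odd sites reproduces the combination $\ket{\phi_f(k)}$ of \eqref{eq:25} (up to a convention-dependent phase), and assembling the pieces gives \eqref{eq:25}. I do not expect a genuine obstacle here: the statement is essentially bookkeeping, the one point requiring care being the parity argument — the two-particle configurations spanning $\widetilde{\mathcal{H}}_c$ sit at odd relative separations, the free step cannot change that parity, so the purely local interaction is inert on this sector, which is the whole content of the lemma.
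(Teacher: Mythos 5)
Your argument is correct and is essentially the paper's own proof: both rest on the observation that $\widetilde{\mathcal{H}}_c$ is supported on odd relative coordinates, which the free step preserves, so the interaction localized at $y=0$ is inert and the restricted operator is a free shift diagonalized by a Fourier transform --- the paper phrases this by inserting the odd-site projector $\widetilde{\mathbf{P}}_{\mathrm{odd}}$ and using the spectral resolution of $\mathbf{P}_{\mathrm{odd}}\mathbf{T}_y^{2}\mathbf{P}_{\mathrm{odd}}$, while you compute the same shift directly on the basis $\ket{f_y}$. The relative phase you flag when matching $\ket{g_k}$ to $\ket{\phi_f(k)}$ is only a matter of the phase convention implicit in $\ket{-k}_o$, a point the paper itself glosses over, so it does not affect the spectral resolution.
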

\begin{proof}
  We have
  $ \widetilde{\mathbf{P}}_c
  \widetilde{\mathbf{A}}_2(p)
  \widetilde{\mathbf{P}}_c =
  \widetilde{\mathbf{P}}_c
\,
  \widetilde{\mathbf{J}} \,
  \widetilde{\mathbf{D}}_p \,
  \widetilde{\mathbf{P}}_c $
Let us define the subspace
  $\ell^2(2\mathbb{Z}+1) \coloneqq
  \spn{\{\ket{2y+1}, y\in \mathbb{Z}\}} \subset
  \ell^2(\mathbb{Z})$ and let us
  denote with
  $\widetilde{\mathbf{P}}_\mathrm{odd} $ the projector on
  $\ell^2(2\mathbb{Z}+1)$.
  It is easy to verify that
   $\widetilde{\mathbf{P}}_c \big(
   \mathbf{I}
  \otimes \widetilde{\mathbf{P}}_\mathrm{odd}\big) =
   \big(
   \mathbf{I}
  \otimes \widetilde{\mathbf{P}}_\mathrm{odd}\big)\widetilde{\mathbf{P}}_c
  =\widetilde{\mathbf{P}}_c$.
  Since $\widetilde{\mathbf{J}}$ is localized at
  the origin, we have that
  \begin{align*}
 \widetilde{\mathbf{P}}_c
\,
  \widetilde{\mathbf{J}} \,
  \widetilde{\mathbf{D}}_p \,
  \widetilde{\mathbf{P}}_c=&
  \widetilde{\mathbf{P}}_c
  \, \big(
   \mathbf{I}
  \otimes \widetilde{\mathbf{P}}_\mathrm{odd}\big)
\,
  \widetilde{\mathbf{J}} \,
  \widetilde{\mathbf{D}}_p \,
  \big(
   \mathbf{I}
  \otimes \widetilde{\mathbf{P}}_\mathrm{odd}\big)\,
  \widetilde{\mathbf{P}}_c =\\
=&\widetilde{\mathbf{P}}_c
  \, \big(
   \mathbf{I}
  \otimes \widetilde{\mathbf{P}}_\mathrm{odd}\big)
\,
  \widetilde{\mathbf{D}}_p \,
  \big(
   \mathbf{I}
  \otimes \widetilde{\mathbf{P}}_\mathrm{odd}\big)\,
  \widetilde{\mathbf{P}}_c =\\
=&\widetilde{\mathbf{P}}_c \,\big(
  \ketbra{e_3}{e_3}\otimes
    \mathbf{P}_\mathrm{odd}\mathbf{T}^{\dag 2}_y
    \mathbf{P}_\mathrm{odd} \big)
    \widetilde{\mathbf{P}}_c+ \\
   &+ \widetilde{\mathbf{P}}_c \,\big(
  \ketbra{e_6}{e_6} \otimes
    \mathbf{P}_\mathrm{odd}\mathbf{T}^{2}_y
    \mathbf{P}_\mathrm{odd} \big) \widetilde{\mathbf{P}}_c
  \end{align*}
 The operator
$  \mathbf{P}_\mathrm{odd}\mathbf{T}^{2}_y
    \mathbf{P}_\mathrm{odd} $ is easily diagonalized as
    follows:
    \begin{align*}
      \mathbf{P}_\mathrm{odd}\mathbf{T}^{2}_y
    \mathbf{P}_\mathrm{odd}  = \int_{-\pi}^\pi \!\!\! dk \, e^{-ik}
      \ket{k}_o\bra{k}_o.
    \end{align*}
    Then we have
    \begin{align*}
  &\ketbra{e_3}{e_3}\otimes
    \mathbf{P}_\mathrm{odd}\mathbf{T}^{\dag 2}_y
    \mathbf{P}_\mathrm{odd} +
  \ketbra{e_6}{e_6} \otimes
    \mathbf{P}_\mathrm{odd}\mathbf{T}^{2}_y
      \mathbf{P}_\mathrm{odd}=\\
    &  =\int_{-\pi}^\pi\! \!\!\! dk \, e^{-ik}
      \big( \ketbra{e_3}{e_3}\otimes
      \ket{k}_o\bra{k}_o +
      \ketbra{e_6}{e_6}\otimes
      \ket{-k}_o\bra{-k}_o \big).
    \end{align*}
     Since $\widetilde{\mathbf{P}}_c
     \ket{e_3}\ket{k}_o =- \widetilde{\mathbf{P}}_c
     \ket{e_6}\ket{-k}_o = 1/\sqrt2 \ket{\phi_f(k)} $
     we have the thesis.
\end{proof}

\begin{lemma}
\label{lmm:specresscatbound}
  We have the following
  spectral resolutions:
  \begin{enumerate}
  \item $\sqrt2 |\lambda| \neq n\pi $ and $2p \neq n' \pi$ with $n,n' \in \mathbb{Z} $
      \begin{align}
        \label{eq:28}
&  \widetilde{\mathbf{P}}_d
  \widetilde{\mathbf{A}}_2(p)
  \widetilde{\mathbf{P}}_d =
\int_{-\pi} ^\pi
    \!\! \!\! dk \, 
      e^{-ik}\ketbra{\phi_s(k)}{\phi_s(k)},
  \end{align}
  
    \item
$\sqrt2 |\lambda| \neq n \pi $ and $2p = n' \pi$ with $n,n' \in \mathbb{Z} $
      \begin{align}
        \begin{aligned}
  \widetilde{\mathbf{P}}_d
  \widetilde{\mathbf{A}}_2(p)
  \widetilde{\mathbf{P}}_d = &
\int_{-\pi} ^\pi
    \!\! \!\! dk \, 
                             e^{-ik}\ketbra{\phi_s(k)}{\phi_s(k)} + \\
        & + (-1)^{n'} \ketbra{\phi_{b0}}{\phi_{b0}}
    \end{aligned} 
      \end{align}
    \item $\sqrt2 |\lambda| = (2n+1) \pi $ with $n \in \mathbb{Z}$
      \begin{align}
        \begin{aligned}
  \widetilde{\mathbf{P}}_d
  \widetilde{\mathbf{A}}_2(p)
  \widetilde{\mathbf{P}}_d = &
\int_{-\pi} ^\pi
    \!\! \!\! dk \, 
                             e^{-ik}\ketbra{\phi_s(k)}{\phi_s(k)} + \\
                             & + \ketbra{\phi_{b-}}{\phi_{b-}} -
                             \ketbra{\phi_{b+}}{\phi_{b+}}
    \end{aligned} 
      \end{align}
      
  \end{enumerate}
where we defined
  \begin{align}
                              &
                                \label{eq:15}
      \begin{aligned}
        N_k \ket{\phi_s(k)} &\coloneqq \sum_{y\geq1}
        \overline{c_k}e^{-iyk}
        \Big(\ket{e_3}\ket{2y}-\ket{e_6}\ket{-2y}\Big
        )+
        \\
        & + \sum_{y\leq 0}
        {c_k}e^{-iyk}\Big(\ket{e_3}\ket{2y}-\ket{e_6}\ket{-2y}\Big
        )
        \\
        & +\ket{\psi_0} \ket{0},
    \end{aligned}
 \\
   &
     \begin{aligned}
 \ket{\psi_0}\coloneqq &d_k(\ket{e_1}-\ket{e_4})-\overline{d_{-k}}(\ket{e_2}-\ket{e_5})  
\end{aligned}\\
    &\begin{aligned}
 \ket{\phi_{b0}}\coloneqq &\frac12 \Big(\ket{e_1}-\ket{e_4}-\ket{e_2}+\ket{e_5} \Big) \ket{0}  
\end{aligned}\\
         &\begin{aligned}
           \ket{\phi_{b-}}\coloneqq &\frac12 \Big(e^{-ip}(\ket{e_1}-\ket{e_4})+ \\
           & - e^{ip}(\ket{e_2}-\ket{e_5}) \Big) \ket{0}  
         \end{aligned}\\
                              &\begin{aligned}
                                \ket{\phi_{b+}}\coloneqq &\frac12 \Big(e^{-ip}(\ket{e_1}-\ket{e_4})+\\
                                &+
                                e^{ip}(\ket{e_2}-\ket{e_5})
                                \Big) \ket{0}
     \end{aligned}\\
&
\begin{aligned}
   c_k\coloneqq &  \cos(\sqrt{2}|\lambda|)(e^{-ik}-
         \cos(2p)) +\\
&+ (e^{ik}- \cos(2p)), 
\end{aligned}\\
    &d_k \coloneqq \frac{i \overline{\lambda}
      \sin(\sqrt{2}|\lambda|)}{\sqrt2 |\lambda|}
      (e^{-ik}-e^{-i2p}).
  \end{align}
and $N_k$ is a suitable normalization constant
such that $\braket{\phi_s(k)}{\phi_s(k')} =
\delta(k-k').$
\end{lemma}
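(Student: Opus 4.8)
The plan is to restrict $\widetilde{\mathbf A}_2(p)$ to the invariant subspace $\widetilde{\mathcal H}_d$ (invariance being established already in the proof of Lemma~\ref{lmm:specrestrivialpart}), on which it is therefore unitary, and to recognise it there as a one-dimensional lattice shift dressed by a point impurity. First I would fix an orthonormal frame of $\widetilde{\mathcal H}_d$: from Eq.~\eqref{eq:22} and $\supp\widetilde{\mathbf P}_- = \widetilde{\mathcal H}_a\oplus\widetilde{\mathcal H}_b\oplus\widetilde{\mathcal H}_c\oplus\widetilde{\mathcal H}_d$ one reads off that $\widetilde{\mathcal H}_d$ is spanned by the mutually orthogonal unit vectors $\ket u\coloneqq\tfrac1{\sqrt2}(\ket{e_1}-\ket{e_4})\ket0$, $\ket v\coloneqq\tfrac1{\sqrt2}(\ket{e_2}-\ket{e_5})\ket0$ and $\ket n\coloneqq\tfrac1{\sqrt2}(\ket{e_3}\ket{2n}-\ket{e_6}\ket{-2n})$, $n\in\mathbb Z$. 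Since the nonzero eigenvalues of $\mathbf O'$ are $\pm\sqrt2|\lambda|$ one has $(\mathbf O')^3 = 2|\lambda|^2\mathbf O'$, hence $\exp(-i\mathbf O') = \mathbf I - i\tfrac{\sin(\sqrt2|\lambda|)}{\sqrt2|\lambda|}\mathbf O' + \tfrac{\cos(\sqrt2|\lambda|)-1}{2|\lambda|^2}(\mathbf O')^2$; inserting this into $\widetilde{\mathbf J}$ of Eq.~\eqref{eq:8}, composing with $\widetilde{\mathbf D}_p$ and projecting with $\widetilde{\mathbf P}_-$, and using that $\widetilde{\mathbf J}$ is supported at $y=0$, a direct computation gives $\widetilde{\mathbf A}_2(p)\ket n = \ket{n-1}$ for every $n\neq1$ together with explicit finite formulae for $\widetilde{\mathbf A}_2(p)\ket1$, $\widetilde{\mathbf A}_2(p)\ket u$ and $\widetilde{\mathbf A}_2(p)\ket v$ that couple $\ket1,\ket u,\ket v$ only to $\ket0,\ket u,\ket v$, with coefficients built from $\cos(\sqrt2|\lambda|)$, $\sin(\sqrt2|\lambda|)$, $e^{\pm2ip}$ and $\lambda$.

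For the continuous spectrum I would then solve $\widetilde{\mathbf A}_2(p)\ket\psi = e^{-ik}\ket\psi$ with the ansatz $\ket\psi = \sum_n\psi_n\ket n + \alpha\ket u + \beta\ket v$. The components away from the impurity give $\psi_{n+1} = e^{-ik}\psi_n$ for $n\neq0$, which forces $\psi_n = e^{-ikn}\psi_0$ for $n\le0$ and $\psi_n = e^{-ik(n-1)}\psi_1$ for $n\ge1$ — a plane wave with an amplitude jump across the defect; the residual unknowns $\psi_0,\psi_1,\alpha,\beta$ obey the three matching equations coming from the $\ket0,\ket u,\ket v$ components. Forming the $\ket u\mp\ket v$ combinations decouples the system: the antisymmetric one fixes the ratio $\alpha:\beta$, the symmetric one then fixes $\psi_1$, and the $\ket0$ equation fixes $\psi_0$, leaving exactly one overall scale, i.e.\ one scattering state per $k$. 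Carrying out the elimination reproduces precisely the amplitudes $c_k$ (for the $\ket{e_3},\ket{e_6}$ plane waves, equal to $\overline{c_k}$ on the $y\ge1$ side and $c_k$ on the $y\le0$ side) and $d_k$ (for the impurity piece $\ket{\psi_0}$) written in the lemma. The normalisation would then follow from the distributional identity $\sum_{y\in\mathbb Z}e^{iy(k-k')} = 2\pi\delta(k-k')$: summing the two plane-wave sums over $y\le0$ and $y\ge1$, over the two internal labels, produces a $\delta$-singularity of weight $4\pi|c_k|^2$, and checking that the remaining regular contributions (including the finite-rank impurity part) leave $\braket{\phi_s(k)}{\phi_s(k')} = \delta(k-k')$ fixes $N_k$ up to a phase; this step needs $c_k$ not to vanish identically, which is exactly what the hypotheses on $\sqrt2|\lambda|$ and $p$ ensure in each of the three cases. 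That $\{\ket{\phi_s(k)}\}$ together with the bound states resolve $\widetilde{\mathbf P}_d$ — i.e.\ completeness of the three displayed resolutions — is the content of Appendices~\ref{sec:compl-solut} and \ref{sec:study-r_pm-}, which I would invoke here.

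It remains to isolate the point spectrum. A normalisable eigenvector with unimodular eigenvalue must obey the same bulk recursion $\psi_{n+1} = e^{-ik}\psi_n$ with $|e^{-ik}| = 1$, which has no nonzero $\ell^2(\mathbb Z)$ solution; hence $\psi_n\equiv0$ and the eigenvector lies in $\spn\{\ket u,\ket v\}$. For such $\alpha\ket u + \beta\ket v$ the $\ket0$-component of its image, which is proportional to $\lambda\sin(\sqrt2|\lambda|)\,(\alpha e^{2ip} + \beta e^{-2ip})$, must vanish, and there are only two mutually exclusive ways. If $\sin(\sqrt2|\lambda|) = 0$ with $\cos(\sqrt2|\lambda|) = -1$, i.e.\ $\sqrt2|\lambda| = (2n+1)\pi$, then $\spn\{\ket u,\ket v\}$ is invariant and $\widetilde{\mathbf A}_2(p)$ acts there by $\ket u\mapsto -e^{2ip}\ket v$, $\ket v\mapsto -e^{-2ip}\ket u$, whose eigenvectors are $\ket{\phi_{b-}},\ket{\phi_{b+}}$ with eigenvalues $+1,-1$ (case 3). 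Otherwise one needs $\alpha e^{2ip} + \beta e^{-2ip} = 0$, and feeding this back into the two surviving matching equations is consistent only if $e^{4ip} = 1$, i.e.\ $2p = n'\pi$, forcing $\alpha = -\beta$ and giving the single eigenvector $\ket{\phi_{b0}}\propto\ket u-\ket v$ with eigenvalue $(-1)^{n'}$ (case 2). When $\sqrt2|\lambda|\neq n\pi$ and $2p\neq n'\pi$ neither condition holds, so there are no bound states (case 1).

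Most of this is routine: Step~1 is bookkeeping of the composition $\widetilde{\mathbf P}_-\widetilde{\mathbf J}\widetilde{\mathbf D}_p\widetilde{\mathbf P}_-$ together with the even-sublattice structure, and the elimination of Step~2 is finite linear algebra. The hard part will be the $\delta$-normalisation of the scattering states: one has to separate the $\delta$-singular part of $\braket{\phi_s(k)}{\phi_s(k')}$ from the regular one and, above all, pin down the zeros of $c_k$ — these occur precisely at the eigenvalues $e^{-ik}$ carried by the bound states of Step~3, which is why the three cases of the lemma dovetail and why the full completeness argument is relegated to the appendices.
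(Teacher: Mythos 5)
Your proposal is correct in substance and follows the same overall skeleton as the paper's proof — exhibit the (improper) eigenvectors of $\widetilde{\mathbf{P}}_d\widetilde{\mathbf{A}}_2(p)\widetilde{\mathbf{P}}_d$, check the $\delta$-normalization, and delegate completeness to Appendices~\ref{sec:compl-solut} and~\ref{sec:study-r_pm-} — but it differs in a worthwhile way in how the eigenvectors are obtained. The paper's proof is purely verificational (``one can verify by direct computation'' that $\ket{\phi_s(k)}$, $\ket{\phi_{b0}}$, $\ket{\phi_{b\pm}}$ are eigenstates), whereas you \emph{derive} them: you first recast the restriction to $\widetilde{\mathcal H}_d$ as a bilateral shift with a point defect coupled to the two-dimensional space $\spn\{\ket u,\ket v\}$ (your basis $\ket u,\ket v,\ket n$ is indeed an orthonormal frame of $\widetilde{\mathcal H}_d$, and $\widetilde{\mathbf A}_2(p)\ket n=\ket{n-1}$ for $n\neq1$ checks out), then solve the eigenvalue equation by a plane-wave ansatz with matching conditions at the defect. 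Your treatment of the point spectrum is the most genuinely different ingredient: using unitarity of the restriction (which deserves one extra line — the operator maps $\widetilde{\mathcal H}_a\oplus\widetilde{\mathcal H}_b\oplus\widetilde{\mathcal H}_c$ \emph{onto} itself, so its restriction to the orthocomplement $\widetilde{\mathcal H}_d$ is onto, hence unitary), you argue that any normalizable eigenvector must have vanishing chain component and then reduce the bound-state classification to a $2\times2$ problem on $\spn\{\ket u,\ket v\}$; this reproduces exactly the three cases of the lemma, including the eigenvalues $(-1)^{n'}$ and $\pm1$, and explains directly why the bound states are perfectly localized — a fact the paper only recovers implicitly through the orthogonality analysis of Appendix~\ref{sec:compl-solut}. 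What your route buys is a self-contained exhaustion of the discrete spectrum and a transparent explanation of why the critical values $2p=n'\pi$, $\sqrt2|\lambda|=(2n+1)\pi$ coincide with the zeros of $c_k$; what it leaves at the same level as the paper is the actual elimination producing $c_k$, $d_k$ and the $\delta$-weight $4\pi|c_k|^2$, which you assert rather than carry out (acceptable, since the paper does the same), and the completeness of the resolutions, which both you and the paper defer to the appendices.
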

\begin{proof}
  Let us consider the first case
  One can verify by direct computation that
  $\overline{N_k}N_{k'}\braket{\phi_s(k)}{\phi_s(k')}
  \propto \delta(k-k') $ and that
  $ N_k\ket{\phi_s(k)} $ are improper eigenstates of
 $  \widetilde{\mathbf{P}}_d
  \widetilde{\mathbf{A}}_2(p)
  \widetilde{\mathbf{P}}_d $
  with eigenvalue
  $e^{-ik} $. Similarly,
  one also verifies (in the appropriate cases) that
$\ket{\phi_{b0}}$ ,   
$\ket{\phi_{b+}}$ and
$\ket{\phi_{b-}}$ are also eigenstates.
  One then have to verify the completeness
  relation
  $ \int_{-\pi}^{\pi}\ketbra{\phi_s(k)}{\phi_s(k)}
  = \widetilde{\mathbf{P}}_d$ (and the
  compleneteness of the other spectral
  resolutions).  This will be done in Appendix
  \ref{sec:compl-solut}
\end{proof}

We can combine the statements of the Lemmas
\ref{lmm:specrestrivialpart},~\ref{lmm:specresfreepart}
and \ref{lmm:specresscatbound}  into a single
proposition which summarizes the results of this section.

\begin{proposition}\label{prp:propositionfinale}
  The unitary operator $\mathbf{A}_2$, defined in
  Equation \eqref{eq:26}, describes the single step
  evolution of the the two particle sector of the
  FCA $\mathcal{A}$ of Equation~\eqref{eq:1}.
  From Equation \eqref{eq:21} we have the
  following decomposition: 
\begin{align}
 \mathbf{A}_2 = \widetilde{\mathbf{A}}_2 \oplus
                  \widetilde{\mathbf{A}}^\perp_2.
\end{align}
The operator $\widetilde{\mathbf{A}}^\perp_2$ acts as the
free evolution and the operator $\mathbf{A}_2$ has the following
spectral resolution:
\begin{align}
  \label{eq:29}
&  \widetilde{\mathbf{A}}_2  = 
\int_{-\pi}^{\pi}\!\! \! dp \, \,\,
                              \widetilde{\mathbf{A}}_2
                             (p) \otimes \ketbra{p}{p},\\
&  \begin{aligned}
    \widetilde{\mathbf{A}}_2(p) =& e^{i2p}
    \widetilde{\mathbf{P}}_a + e^{-i2p}
    \widetilde{\mathbf{P}}_b  +
    \widetilde{\mathbf{B}} + \\
+\int_{-\pi} ^\pi &
    \!\! \!\! dk \, 
    e^{-ik} \Big( \ketbra{\phi_f(k)}{\phi_f(k)} +\ketbra{\phi_s(k)}{\phi_s(k)} \Big)  
  \end{aligned}\\
\nonumber
 & \begin{aligned}
    \widetilde{\mathbf{B}} \coloneqq
    \left \{
      \begin{aligned}
         &     (-1)^{n'} \ketbra{\phi_{b0}}{\phi_{b0}} 
      &\sqrt2 |\lambda| \neq (n+1) \pi \\ &&\mbox{and }2p = n' \pi,   
      \\
&\ketbra{\phi_{b-}}{\phi_{b-}} -
\ketbra{\phi_{b+}}{\phi_{b+}} &     \sqrt2 |\lambda| = (n+1) \pi,   \\
&0 &\mbox{otherwise.}
      \end{aligned}
    \right .
  \end{aligned}
\end{align}
where we remind that $\ket{p}$ are the (improper)
eigenstates of the total momentum and that
$\widetilde{\mathbf{A}}_2(p) $ acts on the
relative coordinate and the internal degrees of freedom.
\end{proposition}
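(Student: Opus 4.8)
The plan is to obtain Proposition~\ref{prp:propositionfinale} by assembling Lemmas~\ref{lmm:specrestrivialpart}, \ref{lmm:specresfreepart} and~\ref{lmm:specresscatbound} together with the reductions already set up, so the proof is essentially bookkeeping and needs no computation beyond what those lemmas and the appendices already carry out. First I would recall the orthogonal splitting of Equation~\eqref{eq:21}: since $[\mathbf{A}_2,\mathbf{Q}]=0$ we have $\mathbf{A}_2=\widetilde{\mathbf{A}}_2\oplus\widetilde{\mathbf{A}}_2^\perp$ with $\widetilde{\mathbf{A}}_2=\mathbf{Q}\mathbf{A}_2\mathbf{Q}$ and $\widetilde{\mathbf{A}}_2^\perp=(\mathbf{I}-\mathbf{Q})\mathbf{A}_2(\mathbf{I}-\mathbf{Q})$. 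The interaction generator $\lambda\mathbf{O}+\overline{\lambda}\mathbf{O}^\dagger$ of Equation~\eqref{eq:11} maps $\widetilde{\mathcal{H}}=\supp\mathbf{Q}'$ into itself and kills its orthogonal complement, so $\mathbf{J}-\mathbf{I}$ is supported in $\supp\mathbf{Q}$; combined with $[\mathbf{F}\otimes\mathbf{F},\mathbf{Q}]=0$ and $[\mathbf{P}_-,\mathbf{Q}]=0$ (both immediate since $\mathbf{Q}$ acts trivially on positions and $\mathbf{Q}'$ is symmetric under the internal swap) this shows that $\widetilde{\mathbf{A}}_2^\perp$ is merely the restriction of the free two-particle walk $\mathbf{P}_-(\mathbf{F}\otimes\mathbf{F})\mathbf{P}_-$ to $(\supp\mathbf{Q})^\perp$. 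That settles the clause ``$\widetilde{\mathbf{A}}_2^\perp$ acts as the free evolution.''

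For the interacting block $\widetilde{\mathbf{A}}_2$ I would use translation invariance, $[\mathbf{T}_2,\widetilde{\mathbf{A}}_2]=0$, together with the change to relative coordinates and the Fourier transform of the centre-of-mass coordinate (Equations~\eqref{eq:10}--\eqref{eq:17}) to obtain the fibered form $\widetilde{\mathbf{A}}_2=\int_{-\pi}^{\pi} dp\,\widetilde{\mathbf{A}}_2(p)\otimes\ketbra{p}{p}$ with $\widetilde{\mathbf{A}}_2(p)=\widetilde{\mathbf{P}}_-\widetilde{\mathbf{J}}\widetilde{\mathbf{D}}_p\widetilde{\mathbf{P}}_-$ as in Equations~\eqref{eq:19}--\eqref{eq:23}. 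It then remains to resolve each fiber, which is exactly what the three lemmas do. Lemma~\ref{lmm:specrestrivialpart} splits $\supp\widetilde{\mathbf{P}}_-=\widetilde{\mathcal{H}}_a\oplus\widetilde{\mathcal{H}}_b\oplus\widetilde{\mathcal{H}}_c\oplus\widetilde{\mathcal{H}}_d$ into invariant subspaces and identifies the $a$ and $b$ parts with the pure phases $e^{i2p}\widetilde{\mathbf{P}}_a$ and $e^{-i2p}\widetilde{\mathbf{P}}_b$ (because $\ket{e_1},\ket{e_4}$ occupy the $e^{i2p}$ slots of $\widetilde{\mathbf{D}}_p$ and $\ket{e_2},\ket{e_5}$ the $e^{-i2p}$ slots, while $\widetilde{\mathbf{J}}$ is localized at $y=0$ and hence acts trivially on $\widetilde{\mathcal{H}}_{a},\widetilde{\mathcal{H}}_{b},\widetilde{\mathcal{H}}_{c}$, which carry no $y=0$ component); Lemma~\ref{lmm:specresfreepart} resolves the $c$ part as the band $\int_{-\pi}^{\pi} dk\,e^{-ik}\ketbra{\phi_f(k)}{\phi_f(k)}$; and Lemma~\ref{lmm:specresscatbound} resolves the $d$ part as $\int_{-\pi}^{\pi} dk\,e^{-ik}\ketbra{\phi_s(k)}{\phi_s(k)}$ plus the discrete term $\widetilde{\mathbf{B}}$, whose three alternatives (no bound state; the single $\ket{\phi_{b0}}$ with sign $(-1)^{n'}$ when $2p\in\pi\mathbb{Z}$; the opposite-sign pair $\ket{\phi_{b\pm}}$ when $\sqrt{2}|\lambda|\in(2\mathbb{Z}+1)\pi$) are precisely those listed in Equation~\eqref{eq:29}. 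Summing the four block contributions yields the stated $\widetilde{\mathbf{A}}_2(p)$, and tensoring with $\ketbra{p}{p}$ and integrating over $p$ gives the full claim.

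The step I would flag as carrying the real content — and which the lemmas deliberately postpone to Appendices~\ref{sec:compl-solut} and~\ref{sec:study-r_pm-} — is not the algebraic reassembly above but the \emph{completeness} verification: that $\int_{-\pi}^{\pi}dk\,\ketbra{\phi_s(k)}{\phi_s(k)}$, supplemented by the relevant bound states, equals $\widetilde{\mathbf{P}}_d$ (and likewise $\int dk\,\ketbra{\phi_f(k)}{\phi_f(k)}=\widetilde{\mathbf{P}}_c$), so that $\widetilde{\mathbf{A}}_2(p)$ has no spectrum beyond what is displayed. This is also what forces the case split in $\widetilde{\mathbf{B}}$: the would-be bound vectors appear exactly where the normalization $N_k$ degenerates, i.e.\ at the critical values of $\sqrt{2}|\lambda|$ and $2p$, so the continuum alone ceases to be complete there and the missing rank-one or rank-two piece must be added back. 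Everything else — mutual orthogonality of $\widetilde{\mathcal{H}}_a,\dots,\widetilde{\mathcal{H}}_d$, the fact that $\widetilde{\mathbf{J}}$ touches only $\widetilde{\mathcal{H}}_d$, unitarity of the assembled fiber, and compatibility of the case conditions with the standing assumption $\sqrt{2}|\lambda|\neq2n\pi$ that keeps $\widetilde{\mathbf{J}}\neq\mathbf{I}$ — follows at once from Equations~\eqref{eq:20}--\eqref{eq:22} and the three lemmas, so I would dispatch it in a line.
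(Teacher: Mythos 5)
Your proposal is correct and follows essentially the same route as the paper: the proposition is obtained exactly by combining the decomposition of Equation~\eqref{eq:21}, the fibering over the total momentum in Equations~\eqref{eq:19}--\eqref{eq:23}, and Lemmas~\ref{lmm:specrestrivialpart}--\ref{lmm:specresscatbound}, with the genuinely nontrivial completeness checks deferred to Appendices~\ref{sec:compl-solut} and~\ref{sec:study-r_pm-}, just as you indicate. Your explicit justification that $\mathbf{J}-\mathbf{I}$ is supported in $\supp\mathbf{Q}$ (so that $\widetilde{\mathbf{A}}_2^{\perp}$ is free) is a slightly cleaner statement of what the paper asserts in the text preceding the lemmas, but it is the same argument.
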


\begin{figure}
  \centering
  \includegraphics[width=0.5\textwidth]{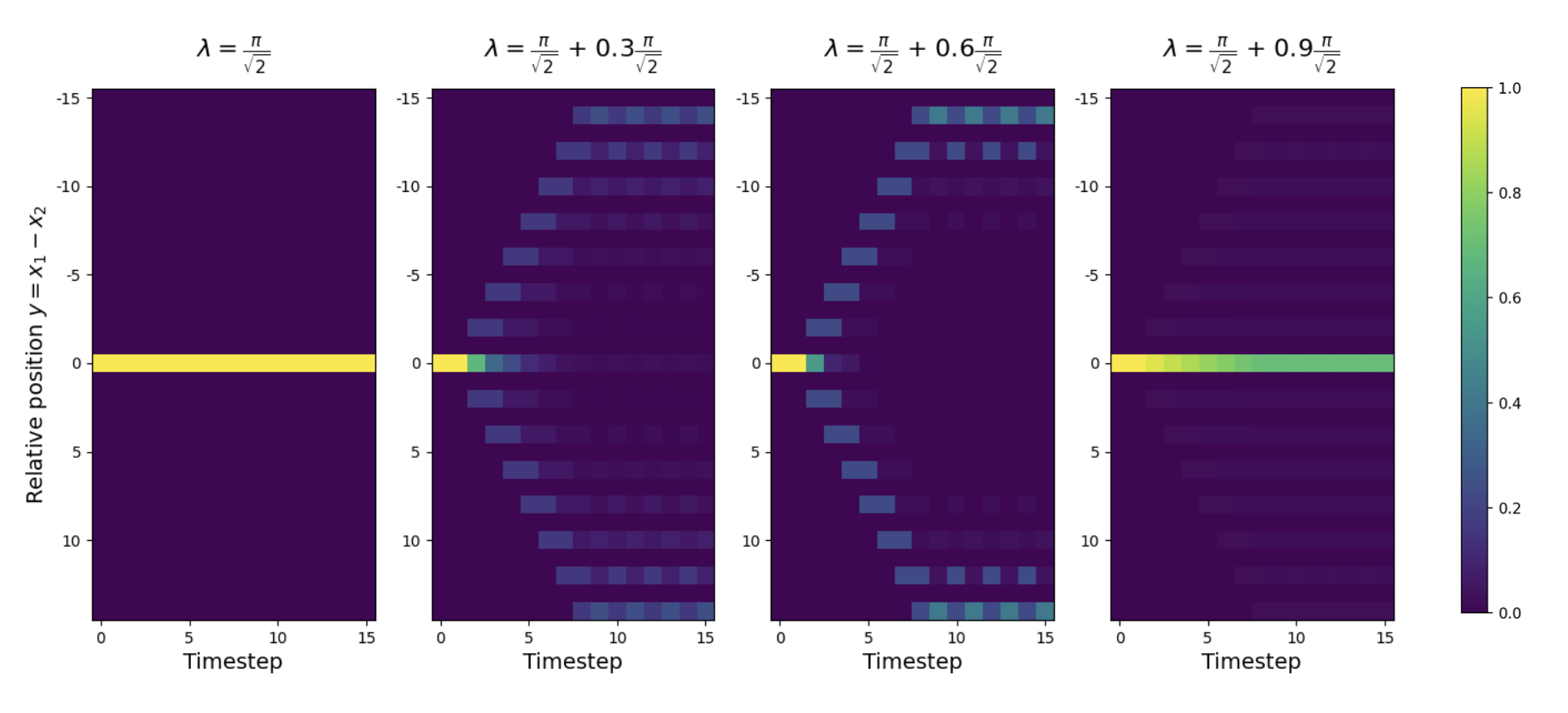}
  \caption{(Colors online) Numerical simulation of
    $\widetilde{\mathbf{A}}_2(p)$
    with
    $p={\pi}/{4}$ and $\lambda \in
    ({\pi}/{\sqrt{2}},{2\pi}/{\sqrt{2}})$.
    On the $x$ axis we report the simulation
    time-step, while on the $y$ axis the relative position
    of the two particles. The probability of
    finding the particles at a certain relative
    position after measuring over the
    internal degrees of freedom is reported in
    color scale.  Bound states are
    observed around values
    $\lambda=n{\pi}/{\sqrt{2}}$,
    $n\in \mathbb{Z}$, while scattering states
    arise for values in between two subsequent
    integer multiples of ${\pi}/{\sqrt{2}}$.
    The initial state is
    $|\phi_{b+}\rangle = \frac{1}{2}\left( e^{-ip}
      \left( |e_1 \rangle - |e_4 \rangle \right) +
      e^{+ip} \left( |e_2 \rangle - |e_5 \rangle
      \right) \right)|0\rangle$.}
  \label{fig:QCA_sim2d}
\end{figure}

\section{Conclusions}
\label{sec:4}
In this work we have studied the two particle
sector of a interacting FCA. The free evolution
is modeled after the massless Dirac quantum walk
\cite{PhysRevA.90.062106} while the interaction is
quartic, local and number preserving. The
interaction has been chosen among the ones
with these properties which are
invariant under the isotropy group of the Weyl
cellular automata.
Those interactions  has been classified
with the help of our in-house code employing the
symbolic mathematics module \texttt{SymPy}.
The code is available on GitHub and the result of
the classification is given in Equation \eqref{eq:class}.

For the specific case considered, the dynamics of the two particle sector,
for a fixed value of the total momentum, can be described as a
single massless particle (the relative position) moving on a one
dimensional lattice with a potential which is
localized at the origin and that interacts with the
internal degrees of freedom. The spectral
resolution of the evolution operator provides a
complete account of the dynamics.  Referring to
Proposition \ref{prp:propositionfinale} we have
that: $i)$ the projectors
$\widetilde{\mathbf{P}}_a$ and
$\widetilde{\mathbf{P}}_b$ project on subspaces
which are orthogonal to the support of the
interaction term and on which the free evolution
acts only as total translation of the center of
mass. States in these subspaces describe particles
that experience no relative motion and no
interaction.  $ii)$ The (improper) eigenstates
$\ket{\phi_f(k)}$ describe free particles that
experience relative motion but do not feel the
presence of the interaction because they never
happen to be in the same site (in the relative
coordinate they stay confined on the odd sites
while the interaction is localized at $0$).
$iii)$ The (improper) eigenstates
$\ket{\phi_s(k)}$ describe particles that move
relatively to each other and that can
interact. These states describe the scattering
processes.  $iv)$ The states $\ket{\phi_{b0}}$ ,
$\ket{\phi_{b+}}$ and $\ket{\phi_{b-}}$ describe
bound states (see Fig.~\ref{fig:QCA_sim2d}). 
The latter are perfectly localized and
they occur only for some critical value of the
total momentum $p$ or some critical value of the
coupling constant $\lambda$.  The origin of this
phenomenon can be traced back to the time
discreteness of the QCA model.  When we
exponentiate an Hamiltonian operator $H$, the
resulting unitary operator $\exp(-iH)$ can have
degeneracies because of the periodicity of the
exponential function. This is what happens to the
free evolution for $2p = n\pi$ and to the
interacting term for
$\sqrt{2}|\lambda| = (2n+1) \pi$.  At those
critical values we find common eigestates of the
free evolution $\widetilde{\mathbf{D}}_p$ and of
the interaction $ \widetilde{\mathbf{J}}
$. Further studies are planned to determine
whether the 1-particle and 2-particle solutions
provide sufficient information to solve the
dynamics in all number sectors. This would open
the route for a statistical analysis of the model,
in order to determine whether it admits of phase
transitions or other critical phenomena. Another
interesting development consists in the study of
bound states as virtual particles and their
effective interactions with other stationary
states.

\section*{Code Availability}
The codes used in this article to derive the 
most general interaction in our setting and to 
simulate the interacting QW studied are publicly 
available on GitHub repository at \url{ https://github.com/edoardo100/interacting_dirac_QCA}. 

\acknowledgments
PP and AB acknowledge financial support
from European Union -- Next Generation EU through the MUR project Progetti di Ricerca d'Interesse Nazionale (PRIN) QCAPP No. 2022LCEA9Y.
\bibliographystyle{apsrev4-1}
\bibliography{bibliography}

\appendix
\section{Completeness of the solutions}
\label{sec:compl-solut}
Let us now prove the compleneteness of the solutions set found
in the previous section. 
A vector $\ket{\chi}$ in
$\widetilde{\mathcal{H}}_d$ can be written as
follows
\begin{align}
  \label{eq:30}
  \begin{aligned}
  \ket{\chi} &=
  \ket{\chi_0}\ket{0}+
  \sum_y \overline{c_{3,y}}(\ket{e_3}\ket{2y} - \ket{e_6}\ket{-2y} )\\
  \ket{\chi_0} &\coloneqq   \overline{{c}_{1,0}}(\ket{e_1} - \ket{e_4} ) +
  \overline{{c}_{2,0}}(\ket{e_2} - \ket{e_5} )    
  \end{aligned}
\end{align}
From Equation \eqref{eq:15} we have that
\begin{align}
  \label{eq:32}
  &\braket{\chi}{\phi_s(k)} =2 f(k) \\
  \nonumber
  &\begin{aligned}
    f(k) =&
    {c_{1,0}}d_k +
    {c_{2,0}}\overline{d_{-k}}+
    \sum_{y \leq 0 }   {c}_{3,y}c_ke^{-iyk}+\\
    &+  \sum_{y \geq 1 }   {c}_{3,y}\overline{c_k}e^{-iyk}  
    =\\
    =&\sum_{y \leq -1}
    ( \gamma c_{3,y-1} -\epsilon c_{3,y}+c_{3,y+1})
    e^{-iyk}+\\
    &+ \gamma c_{3,-1}
    -\epsilon c_{3,0}
    + \sigma e^{-i2p} c_{1,0} 
    + \sigma e^{i2p} c_{2,0}
    + \delta c_{3,1} \\
& 
+ (\gamma c_{3,0}
    - \sigma c_{1,0} 
    - \sigma c_{2,0}
    -\epsilon c_{3,1}
    + \delta c_{3,2}) e^{-ik} \\
&+  \sum_{y \geq 2}
 (  c_{3,y-1} -\epsilon c_{3,y}+ \gamma c_{3,y+1}) e^{-iyk}
\end{aligned} \\
  \nonumber
  &
    \begin{aligned}
    &\gamma \coloneqq \cos(\sqrt2 |\lambda|), \quad
  \epsilon \coloneqq \cos(2p) (\gamma + 1) \\
  &\sigma \coloneqq -\frac{i \overline{\lambda}
      \sin(\sqrt{2}|\lambda|)}{\sqrt2 |\lambda|}      
    \end{aligned}.
\end{align}
Therefore the condition
$\braket{\chi}{\phi_s(k)} =0$ is equivalent to
$f(k) =0 $ for any $k$. Since $f(k)$ is a Fourier
series, the condition $f(k) =0 $ for any $k$
implies the following system of equations:
\begin{align}
  \label{eq:34}
  \begin{aligned}
    &0 = \gamma c_{3,y-1} -\epsilon c_{3,y}+c_{3,y+1} 
    \quad\qquad \mbox{for }  y \leq -1\\
    &\begin{aligned}
0=&\gamma c_{3,-1}
    -\epsilon c_{3,0}
    +\sigma e^{-i2p} c_{1,0} +\\
    &+ \sigma e^{i2p} c_{2,0} 
    + \delta c_{3,1} 
    \end{aligned}\\
   &0=  \gamma c_{3,0}
    -\sigma c_{1,0} 
    - \sigma c_{2,0}
    -\epsilon c_{3,1}
    + \delta c_{3,2} \\
 &0= c_{3,y-1} -\epsilon c_{3,y}+ \gamma c_{3,y+1} \quad\qquad \mbox{for } y \geq 2 . 
  \end{aligned}
\end{align}
Equation \eqref{eq:34} can be conveniently
rewritten as follows:
\begin{align}
  \label{eq:33}
    &0 = \gamma x_{n-1} -\epsilon x_{n}+x_{n+1} 
    \quad\qquad \mbox{for }  n \leq -1\\
    &\begin{aligned}\label{eq:27}
0=&\gamma x_{-1}
    -\epsilon x_{0}
    +\sigma e^{-i2p} x_{1} +\\
    &+ \sigma e^{i2p} x_{2} 
    + \gamma x_{3} 
    \end{aligned}\\
   &0=  \gamma x_{0}\label{eq:35}
    -\sigma x_{1} 
    - \sigma x_{2}
    -\epsilon x_{3}
    + \gamma x_{4} \\
 &0= x_{n-1} -\epsilon x_{n}+ \gamma x_{n+1} \quad\qquad \mbox{for } n \geq 4 . \label{eq:36}
\\
  &  x_n=
  \begin{cases}
    c_{3,n} & n \leq 0\\
    c_{1,0} & n =1 \\
    c_{2,0} & n=2 \\
    c_{3,n-2}&n\geq 3
   \end{cases}
\end{align}

If $2 \sqrt{2} |\lambda| =  n \pi$ then we have $\gamma = 0$ and  Equations \eqref{eq:33} to \eqref{eq:36}
simplifies as follows:
\begin{align}
\label{eq:37}
  &0 = -\epsilon x_{n}+x_{n+1} 
    \quad\qquad \mbox{for }  n \leq -1\\
  &\begin{aligned}
    \label{eq:39}
0=&
    -\epsilon x_{0}
    +\sigma e^{-i2p} x_{1} + \sigma e^{i2p} x_{2} 
    \end{aligned}\\
\label{eq:40}
  &0=  
    -\sigma x_{1} 
    - \sigma x_{2}
    -\epsilon x_{3}
 \\
\label{eq:38}
  &0= x_{n-1} -\epsilon x_{n} \quad\qquad \mbox{for } n \geq 4 . 
\end{align}
If also $\epsilon = 0$ then Equations
\eqref{eq:37} and \eqref{eq:38} implies
$x_n = 0 $ for $n \geq 3$ and $n \leq 0$.
On the other hand, if
$\epsilon \neq 0$  the linear recurrence relations
\eqref{eq:37} and \eqref{eq:38} have the solutions
\begin{align}
  \begin{aligned}
  x_{-n} = \epsilon^{-n}x_{0} \quad \mbox{for } n \geq 1,\\
  x_{n+3} = \epsilon^{-n}x_{3} \quad \mbox{for } n \geq 1.   
  \end{aligned}
 \end{align}
However, $\gamma = 0$ implies that
$|\epsilon|= |\cos(2p)| \leq 1$ and the convergence of
$\sum_n |x_n|^2$ implies that $x_{0} = x_3 =0$.
We then conclude
that $x_n = 0 $ for $n \geq 3$ and $n \leq 0$.

Let us now suppose that
$2 \sqrt{2} |\lambda| \neq n \pi$ which implies
$\gamma \neq 0$.  The linear recurrence relations
\eqref{eq:34} and \eqref{eq:36} can be rewritten as follows:
\begin{align}
  \label{eq:46}
   x_{-n-1} - \frac{\epsilon}{\gamma} x_{-n}
  + \frac{1}{\gamma}x_{-n+1} =0&
                                 \quad\qquad \mbox{for }  n \geq 1\\
  \label{eq:47}
  x_{n+1}  - \frac{\epsilon}{\gamma} x_{n}
  +\frac{1}{\gamma} x_{n-1}
   =0&
    \quad\qquad \mbox{for }  n \geq 4
\end{align}
If $\epsilon = 0$ we have
\begin{align}
  \label{eq:49}
   x_{-2n} =  \left(-\frac{1}{\gamma} \right )^n x_0&
                                 \quad\qquad \mbox{for }  n \geq 1,\\
  \label{eq:47}
  x_{2n+1} 
   = \left(-\frac{1}{\gamma} \right )^n x_3 &
    \quad\qquad \mbox{for }  n \geq 1.
\end{align}
Therefore, the convergence of $\sum_n |x_n|^2$
implies that $x_n = 0 $ for $n \geq 3$ and
$n \leq 0$.  Let us now suppose that $\gamma,\epsilon \neq
0$. The characteristic equations of the recurrence
relations \eqref{eq:46} and \eqref{eq:47} are the
same and have the roots
\begin{align}
  \label{eq:48}
  r_{\pm} = \frac{\epsilon \pm
  \sqrt{\epsilon^2 - 4 \gamma}}{2 \gamma}
\end{align}
If $\epsilon^2 - 4 \gamma = 0$ then $r_+ = r_- = r =
\frac{\epsilon}{2\gamma}$ and
\begin{align}
&
 x_{-n} = (a +b n)r^n                                \quad\qquad \mbox{for }  n \geq 0,\\
  &
x_{n+2} = (a' +b' n)r^n    \quad\qquad \mbox{for }  n \geq 1
\end{align}
for some coefficient $a$ , $a'$, $b$ and $b'$.
Since
$|r| = \frac{|\epsilon|}{2|\gamma|} =
\frac{2}{|\epsilon |} \geq 1$ the convergence of
of $\sum_n |x_n|^2$ implies that $x_n = 0 $ for
$n \geq 3$ and $n \leq 0$.  If
$\epsilon^2 \geq 4 \gamma \neq 0$ then $r_+ $ and
$ r_- $ are distinct and the solution of
the recurrence relation is
\begin{align}
&
 x_{-n} = ar_-^n +b r_+^n                                \quad\qquad \mbox{for }  n \geq 0,\\
  &
x_{n+2} =a'r_-^n +b' r_+^n        \quad\qquad \mbox{for }  n \geq 1.
\end{align}
for some for some coefficient $a$ , $a'$, $b$ and $b'$.
One can verify (see Appendix \ref{sec:study-r_pm-})
that $|r_\pm| \geq 1$ and
the convergence of
of $\sum_n |x_n|^2$ implies that $x_n = 0 $ for
$n \geq 3$ and $n \leq 0$.
We have then proved that Equation \eqref{eq:33}
and Equation \eqref{eq:36} implies
that $x_n = 0 $ for
$n \geq 3$ and $n \leq 0$.
Then, Equations \eqref{eq:39} and \eqref{eq:40} become
\begin{align}
  \label{eq:45}
  \begin{aligned}
0=&
    \sigma e^{-i2p} x_{1} + \sigma e^{i2p} x_{2} ,
  \\ 
  0=&  
    -\sigma x_{1} 
    - \sigma x_{2}.    
  \end{aligned}
\end{align}
If
$\sqrt2 |\lambda| \neq n \pi \iff \sigma \neq 0$
and $\sin(2p) \neq 0$ Equation \eqref{eq:45}
implies $x_1 = x_2 = 0$.  If
$\sigma \neq 0$ and $\sin(2p) = 0$ we have the
solution $x_1 = -x_2$ and the orthogonality with
$\ket{\phi_{b0}}$ gives
$x_1 = x_2 = 0$.
Similarly, if $\sigma = 0$ the orthogonality with
$\ket{\phi_{b1}}$ and $\ket{\phi_{b2}}$
implies $x_1 = x_2 = 0$.

\section{Study of $|r_{\pm}|$ }
\label{sec:study-r_pm-}
We will prove that $|r_{\pm}| > 1$ by considering
three cases.

\subsection{$\epsilon^2
  - 4\gamma >0$ and $\gamma >0$ }

If $\epsilon^2
  - 4\gamma >0$ then $r_{\pm} \in \mathbb{R}$. 
If $\gamma >0$ we have
\begin{align*}
  |r_{\pm}| = \frac{|\epsilon \pm \sqrt{\epsilon^2
  - 4\gamma}|}{2\gamma} \geq
   \frac{|\epsilon| - \sqrt{\epsilon^2
  - 4\gamma}}{2\gamma}. 
\end{align*}
Since $\epsilon^2 - 4\gamma > 0$ and $|\gamma|
\leq 1$ implies
$|\epsilon| -2|\gamma| >  0$ we have
\begin{align*}
&    \frac{|\epsilon| - \sqrt{\epsilon^2
  - 4\gamma}}{2\gamma} \geq 1 \iff
  |\epsilon| - 2\gamma \geq \sqrt{\epsilon^2
  - 4\gamma} \iff \\
  &(|\epsilon| - 2\gamma )^2 \geq \epsilon^2 \iff
    \gamma - |\epsilon| \geq -1 \iff \\
  &\gamma - |\cos(2p)(1+\gamma)| \geq -1 \iff\\
  &(1 - |\cos(2p)|)\gamma - |\cos(2p)| \geq -1
\end{align*}
which proves that $|r_\pm| \geq 1$ for $\gamma > 0$.

\subsection{$\epsilon^2
  - 4\gamma >0$ and $\gamma < 0$ }

If $\epsilon^2
  - 4\gamma >0$ then $r_{\pm} \in \mathbb{R}$. 
If $\gamma < 0$ we have
\begin{align*}
  |r_{\pm}| = \frac{|\epsilon \pm \sqrt{\epsilon^2
  - 4\gamma}|}{2\gamma} \geq
\frac{   \sqrt{\epsilon^2  + 4 |\gamma|} - |\epsilon|}{2|\gamma|}. 
\end{align*}
Then we have
\begin{align*}
 & \frac{   \sqrt{\epsilon^2  + 4 |\gamma|} -
  |\epsilon|}{2|\gamma|} \geq 1 \iff
   \sqrt{\epsilon^2  + 4 |\gamma|} \geq 2|\gamma|
  + |\epsilon| \iff \\
 & \epsilon^2  + 4 |\gamma \geq (2|\gamma|
  + |\epsilon|)^2 \iff  1 \geq |\gamma| +
  |\epsilon| \iff\\
  &1 \geq |\gamma| + |\cos(2p)(1 - |\gamma|)| \iff
  \\
  &1 - |\cos(2p)| \geq (1 - |\cos(2p)|) |\gamma|
  \iff 1 \geq |\gamma|
\end{align*}
which proves  that $|r_\pm| \geq 1$ for $\gamma < 0$.

\subsection{$\epsilon^2
  - 4\gamma < 0$ }
If $\epsilon^2
- 4\gamma < 0$ then
\begin{align*}
  r_\pm =
  \frac{\epsilon \pm i \sqrt{4\gamma -
  \epsilon^2}}
  {2\gamma}
\end{align*}
which implies $ |r_{\pm}|^2 = \frac1\gamma \geq 1$.

\end{document}